\newcommand{\eps}{\varepsilon}
\newcommand{\clE}{\mathcal{E}}
\newcommand{\clS}{\mathcal{S}}
\newcommand{\Id}{\mathbb{1}}
\newcommand{\bbF}{\mathbb{F}}
\newcommand{\bbC}{\mathbb{C}}
\newcommand{\qbra}[1]{\llbracket{#1}\rrbracket}
\newcommand{\cbra}[1]{[{#1}]}
\renewcommand{\C}{\mathrm{C}}
\renewcommand{\H}{\mathrm{H}}
\newcommand{\Q}{\mathrm{Q}}
\newcommand{\supp}{\mathrm{supp}}
\newcommand{\wt}{\mathrm{wt}}
\newcommand{\dist}{\mathrm{dist}}
\newcommand{\qdist}{\mathrm{qdist}}
\newcommand{\Ddist}{\mathrm{Gdist}}
\newcommand{\mindeg}{\mathrm{mindeg}}
\newcommand{\Cl}{\mathrm{Cl}}
\renewcommand{\poly}{\mathrm{poly}}
\newcommand{\MD}{\textsc{MinDist}}
\newcommand{\GMD}{\textsc{GapDist}}
\newcommand{\GAMD}{\textsc{GapAddDist}}
\newcommand{\QMD}{\textsc{QMinDist}}
\newcommand{\GQMD}{\textsc{GapQDist}}
\newcommand{\GAQMD}{\textsc{GapAddQDist}}
\newcommand{\matel}[3]{\langle #1 | #2 | #3\rangle}
\newtheorem{theorem}{Theorem}
\newtheorem{lemma}{Lemma}
\newtheorem{cor}[theorem]{Corollary}
\newtheorem{fact}{Fact}
\newtheorem{claim}{Claim}
\theoremstyle{definition}
\newtheorem{definition}{Definition}
\newtheorem{problem}{Problem}
\title{On the hardness of the minimum distance problem of quantum codes}
\author{Upendra~Kapshikar\thanks{ U. Kapshikar is with Centre for Quantum Technologies, National University of Singapore.}~
        and~Srijita~Kundu
\thanks{S. Kundu is with Institute for Quantum Computing, University of Waterloo, Canada.}
}
\date{}
\begin{document}
\maketitle
\begin{abstract}
We study the hardness of the problem of finding the distance of quantum error-correcting codes. The analogous problem for classical codes is known to be NP-hard, even in approximate form. For quantum codes, various problems related to decoding are known to be NP-hard, but the hardness of the distance problem has not been studied before. In this work, we show that finding the minimum distance of stabilizer quantum codes exactly or approximately is NP-hard. This result is obtained by reducing the classical minimum distance problem to the quantum problem, using the CWS framework for quantum codes, which constructs a quantum code using a classical code and a graph. A main technical tool used for our result is a lower bound on the so-called graph state distance of 4-cycle free graphs.
In particular, we show that for a 4-cycle free graph $G$, its graph state distance is either $\delta$ or $\delta+1$, where $\delta$ is the minimum vertex degree of $G$.
Due to a well-known reduction from stabilizer codes to CSS codes, our results also imply that finding the minimum distance of CSS codes is also NP-hard.
\end{abstract}
\begin{IEEEkeywords}
Stabilizer codes, CWS codes, graph states, minimum distance.
\end{IEEEkeywords}

\section{Introduction}
In 1995, Shor~\cite{Shor_coding} showed that similar to classical computation, quantum computation could also \emph{handle} errors using a quantum analogue of error-correcting codes.
The famous 9-qubit code he constructed could correct a single Pauli error on any of the nine qubits.
Soon after that, Calderbank and Shor~\cite{CS98} and Steane~\cite{steane} came up with a standard procedure to combine two classical error-correcting codes (satisfying certain conditions) to obtain a quantum error-correcting code. In his seminal work~\cite{Dan_thesis}, Gottesman formalized the \emph{stabilizer} setup, giving a group theoretic framework to the study of quantum error correction. 

As far back as the works of Calderbank, Shor and Steane, one of the most important directions for constructing new quantum codes has been the fruitful use of classical error correction. One common way to obtain quantum codes is by the CSS construction; using two classical codes such that one is contained in the dual of the other. Codeword stabilized (CWS) codes~\cite{CWS} present another way to use classical codes. Codeword stabilized codes are quantum codes made out of two classical objects: a graph and a classical error-correcting code. The class of CWS codes contains all the stabilizer codes and also encompasses some non-stabilizer (or non-additive) quantum codes.
\paragraph{Hardness of problems related to error correction.} The two most important computational problems related to error-correcting codes are: the problem of decoding a codeword upon which an error has acted and the problem of finding the minimum distance of a code. The decoding problem can be stated in many forms, but in syndrome decoding of linear codes, we are given an error \emph{syndrome}. In the classical case, if the parity check matrix of the code is $H$, the syndrome of an error $e$ on codeword $x$ is $H(x+e)=He$. In \emph{maximum likelihood decoding}, we are given an error syndrome $s$, and we want to find the most likely error that leads to it; if we assume there is an error in each component independently, this is the vector $u$ that satisfies $Hu=s$ and has the minimum Hamming weight. The minimum distance problem of a code looks similar to the maximum likelihood decoding problem: here, we are given a parity check matrix of the code and want to find a non-zero vector $u$ of minimum weight such that $Hu$ is the zero vector. Note that because of the non-zero vector requirement in the minimum distance problem, it is not simply a special case of the maximum likelihood decoding problem.

Both the maximum likelihood decoding problem and the minimum distance problem for classical codes (in their decision versions) are known to be NP-complete \cite{BMT78, Vardy}. The first problem being NP-complete, obviously closes the avenue for a generic decoding algorithm that works for all linear codes and errors of any weight. Of course, this problem is trying to decode all linear codes, under all possible errors: good decoding algorithms exist for specific classes of codes, and when the error is promised to have weight at most $d/2$, where $d$ is the distance of the code. However, finding the distance of a generic code in order in order to verify that the promise on the error is satisfied, is also hard. The minimum distance problem being NP-hard also rules out an obvious way of finding codes with good distance: if there was a polynomial time algorithm for the problem, one could generate parity check matrices at random and check if the associated code has good enough distance or not. Such an algorithm might still have been possible if one could efficiently compute the minimum distance of a code approximately rather than exactly. But in fact, (multiplicative and additive) approximate versions of the minimum distance problem have also been shown to be NP-hard under polynomial-time randomized reductions \cite{MDS07}.

Syndrome decoding for quantum stabilizer codes can be defined analogous to the classical case. In this case however, rather than finding the most likely error, finding the most likely error coset is desired \cite{PC08}. In \cite{HL11}, it was shown that using Hamming weight as the distance metric, decoding of stabilizer codes is NP-hard, regardless of whether the most likely error or most likely error coset is considered. In \cite{Fuj11}, it was shown that using symplectic weight as the distance metric, the decision version of the maximum likelihood decoding problem is NP-complete. Ref.~\cite{KL13} showed that this result holds even when one considers a smaller class of stabilizer codes with full-rank check matrices, and when one restricts the error model to the depolarizing channel. Ref.~\cite{IP13} considered the quantum decoding problem, while taking degeneracy into account: they showed that the problem of degenerate quantum maximum likelihood decoding, which tries to find the most likely equivalence class of errors, all of which lead to the same syndrome, is in fact $\#$P-complete.

As far as we are aware, the hardness of the quantum minimum distance problem has however not been studied. Note that showing that a minimum distance problem is hard can be quite different from showing a decoding problem is hard: in the classical case, there was a gap of nearly two decades between showing that the maximum likelihood decoding problem is hard and the minimum distance problem is hard. The results showing the NP-hardness of quantum decoding problems have all reduced the classical maximum likelihood decoding problem to the quantum problem. For example, in the formulation of \cite{Fuj11}, the decision version of the maximum likelihood decoding problem for stabilizer codes is stated as follows: given two $m\times n$ binary matrices $A$ and $A'$ such that $AA'^T+A'A^T$ is the zero matrix, a syndrome $y$ of length $m$, do there exist vectors $x$ and $x'$ of length $n$ such that the Hamming weight of $x \lor x'$ is at most $w$, and $Ax + A'x' = y$? The decision version of the classical problem (is there a vector $u$ of weight at most $w$ such that $Hu=s$?) reduces to this easily by setting $A=H, y=s$, and $A'$ being the zero matrix. The minimum distance problem for stabilizer codes can similarly be stated as follows: given matrices $A$ and $A'$ such that $AA'^T + A'A^T$ is the zero matrix, is there a vector $(x \vert x')$ outside the row-space of $( A \vert A^\prime)$, such that at least one of $x$ and $x'$ is non-zero, the Hamming weight of $x \lor x'$ (or equivalently, the symplectic weight of $(x \vert x')$) is at most $w$, and $Ax+ A'x'$ is the zero vector? However, it is not difficult to see that the classical minimum distance problem cannot be easily reduced to this problem. In particular, setting $A'$ to be the zero matrix does not help: if we set $A^\prime=0$, then the quantum problem always has a solution with symplectic weight 1 by setting $x^{\prime}$ to be any vector of Hamming weight 1, and $x=0$. 

Moreover, even though both CWS codes and CSS codes use classical linear error-correcting codes as their central building blocks, the hardness of the minimum distance of classical linear codes does not immediately imply the hardness of the minimum distance problem for CSS codes, or CWS codes.
For a CSS code $\Q_{CSS}$, comprised of two classical codes $\C_1$ and $\C_2$ (satisfying $\C_1^\perp \in \C_2$), one has the following relation: $\qdist(\Q_{CSS}) \geq \min\lbrace{\dist(\C_1),\dist(\C_2)}\rbrace$.
But note that, due to the orthogonality condition on CSS codes, one can not use an arbitrary pair $\C_1,\C_2$. 
If we want to reduce the classical minimum distance problem, starting with $\C_1$, we need to find a code $\C_2$, such that, $\C_2$ satisfies the orthogonality condition and has minimum distance not less than $\C_1$. 
One way to get around this is to use self-dual (or weakly self-dual) classical codes in the CSS construction. But it is not clear whether the hardness result for classical codes still holds under the restriction that the code is self-dual (or weakly self-dual).

In the above cases, the main bottleneck seems to be the following: one cannot use an arbitrary linear code in the construction of quantum codes that we want to show hardness for; they should come from some restricted subclass $\mathcal{C}_0$ of the class of all linear codes $\mathcal{C}$.
The CWS framework is free from any such restriction, in that, one can use any classical code $\C \subset \mathbb{F}_2^n$ and a graph $G$ on $n$ vertices to construct a quantum code.  
But this framework suffers badly while translating distance from $\C$ to $\Q$; even if one starts with linear code $\C$ having distance $\dist(\C)$, the constructed CWS code $\Q$ can be of distance $\qdist(\Q) \ll \dist(\C)$. 
\subsection{Our results and proof overview}
In this paper, we investigate the hardness of finding the minimum distance of a quantum code.
Our first result can be informally stated as below:
\begin{theorem} \label{thm:intro_CWS_hardness}
The minimum distance problem for stabilizer codes is NP-complete. Multiplicatively or additively approximating the distance of a stabilizer code is also NP-hard under polynomial-time randomized reductions. The problems remain NP-hard even with the promise that the code is non-degenerate.
\end{theorem}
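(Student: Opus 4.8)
The plan is to reduce the classical minimum distance problem \MD, which is NP-hard \cite{Vardy} and NP-hard to approximate under randomized reductions \cite{MDS07}, to the quantum problem via the CWS construction. Membership in NP is the easy direction: given a stabilizer code through its check matrix and a bound $w$, a Pauli operator of weight at most $w$ lying in the normalizer but not the stabilizer is a polynomial-size witness checkable in polynomial time. The heart of the matter is a distance-preserving reduction, and the target is a stabilizer code $\Q$ with $\qdist(\Q)=\dist(\C)$ exactly.

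Given a classical code $\C\subseteq\bbF_2^n$ and a weight bound $w$, first I would pad $\C$ to a code $\C'\subseteq\bbF_2^N$ of length $N=\Theta(n^2)$ by appending $N-n$ zero coordinates; this preserves $\dist(\C')=\dist(\C)$ and keeps every nonzero codeword of weight at most $n$. Next I would construct, in polynomial time, a $4$-cycle free graph $G$ on these $N$ vertices with minimum degree $\delta>2n$; such a graph exists precisely because $N$ is quadratic in $n$ (a $4$-cycle free graph on $N$ vertices can have minimum degree $\Theta(\sqrt N)$, e.g.\ incidence graphs of projective planes, while the Moore-type bound caps it at $O(\sqrt N)$, which is why the padding is necessary). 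Letting $\Gamma$ be the adjacency matrix of $G$, I form the CWS code $\Q$ from $G$ and $\C'$; since $\C'$ is linear, $\Q$ is a stabilizer code whose check matrix is written down explicitly from $\Gamma$ and a generator matrix of $\C'$.

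The core claim is $\qdist(\Q)=\dist(\C)$. Writing a Pauli error as $E=X^{a}Z^{b}$, the CWS detection condition \cite{CWS} says $E$ is undetectable and acts nontrivially exactly when its classicalization $\Cl_G(E)=\Gamma a\oplus b$ is a nonzero codeword of $\C'$, so that
\[
\qdist(\Q)=\min_{c\in\C'\setminus\{0\}}\ \min_{a}\ \wt\!\left(X^{a}Z^{\Gamma a\oplus c}\right).
\]
For a fixed $c$, the choice $a=0$ contributes $\wt(c)$. For any $a\neq0$, the operator $X^{a}Z^{\Gamma a}$ is a nontrivial stabilizer element, so the graph state distance bound gives $\wt(X^{a}Z^{\Gamma a})\ge\Ddist(G)\ge\delta$; combining this with the elementary inequality $\wt(PQ)\ge\wt(P)-\wt(Q)$ applied to $P=X^{a}Z^{\Gamma a}$ and $Q=Z^{c}$ yields $\wt(X^{a}Z^{\Gamma a\oplus c})\ge\delta-\wt(c)\ge\delta-n>n\ge\wt(c)$. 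Hence for every $c$ the inner minimum is attained at $a=0$ and equals $\wt(c)$, so $\qdist(\Q)=\min_{c\in\C'\setminus\{0\}}\wt(c)=\dist(\C)$. This is exactly where the $4$-cycle free hypothesis does the work: it is what makes $\Ddist(G)$ as large as $\delta$, preventing $X$-type errors from producing low-weight logical operators through cancellation of the $Z$-pattern $\Gamma a$.

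Two further features come for free. Since every nontrivial stabilizer element has weight at least $\delta>2n\ge 2\,\qdist(\Q)$, the code $\Q$ is non-degenerate, so the hardness holds under the non-degeneracy promise. And because the reduction produces $\qdist(\Q)=\dist(\C)$ on the nose, any multiplicative or additive gap in the classical distance transfers verbatim to the quantum distance, so the randomized approximation-hardness of \MD\ \cite{MDS07} carries over to stabilizer codes. I expect the main obstacle to be the graph state distance bound itself---establishing $\Ddist(G)\ge\delta$ for $4$-cycle free $G$---since that requires controlling the even-parity cancellations among the overlapping neighborhoods $\{N(v)\}_{v\in T}$ for an arbitrary vertex set $T$; the $4$-cycle free condition (any two vertices share at most one neighbor) is exactly the lever for this, but turning it into the clean dichotomy $\Ddist(G)\in\{\delta,\delta+1\}$ is the technical crux, with the explicit polynomial-time construction of the graph a secondary concern.
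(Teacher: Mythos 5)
Your reduction is essentially the paper's: pad $\C$ to length $N=\Theta(n^2)$, take a $4$-cycle-free projective-plane-type graph with minimum degree $\Theta(\sqrt{N})$, form the CWS code, invoke the graph-distance lower bound $\Ddist(G)\geq\delta$ for $4$-cycle-free graphs (the paper's Lemma~\ref{lemma:intro_Ddist_restruct}, which the paper also treats as the separate technical crux), and conclude $\qdist(\Q)=\dist(\C)$, from which exact, approximate, and non-degenerate-promise hardness all follow. One step of yours is actually cleaner than the paper's: for errors with $\Cl_G(E)=c\neq 0$ you factor $E=X^{a}Z^{\Gamma a}\cdot Z^{c}$ and use $\wt(PQ)\geq\wt(P)-\wt(Q)$ together with $\wt(X^aZ^{\Gamma a})\geq\Ddist(G)\geq\delta$, whereas the paper redoes a support-overlap count using the at-most-one-common-neighbour property; your version channels everything through a single invocation of the lemma.

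There is, however, one genuine error in your write-up: your statement of the CWS detection condition is wrong. Fact~\ref{fc:CWS-detect} has two clauses: $E$ is undetectable iff either $\Cl_G(E)$ is a nonzero codeword of $\C'$, \emph{or} $\Cl_G(E)=0$ and $E$ anticommutes with some word operator $Z(c)$. Errors of the second kind (sign-flips of the graph stabilizer group acting as nontrivial diagonal logical operators) are undetectable and act nontrivially even though their classicalization is zero, so your displayed formula for $\qdist(\Q)$ is not an equality in general; the true distance is the minimum of your expression and $\min\{\wt(X^{a}Z^{\Gamma a}) : a\cdot c \text{ odd for some } c\in\C'\}$. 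In your construction this missing term is at least $\Ddist(G)\geq\delta>2n$, so the conclusion $\qdist(\Q)=\dist(\C)$ survives---indeed your non-degeneracy paragraph contains exactly the bound needed---but as written the core claim rests on a false characterization and the case analysis is incomplete. Two smaller points: (a) you pad to an arbitrary $N=\Theta(n^2)$ and then ask for a suitable graph on exactly $N$ vertices, but projective-plane constructions exist only for vertex counts of special form ($q^2+q+1$, or $2(q^2+q+1)$ for incidence graphs, $q$ a prime power), so the length must be chosen of that form first; the paper's Lemma~\ref{lem:mp-algo} handles this via Bertrand's postulate. (b) The additive gap does not transfer ``verbatim'': a classical gap of $\tau\cdot n$ becomes $\Theta(\tau\sqrt{N})$ relative to the new code length $N$, which is precisely why the paper defines $\GAQMD_\tau$ with a $\tau\cdot\sqrt{n}$ additive term rather than $\tau\cdot n$.
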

In the proof of Theorem \ref{thm:intro_CWS_hardness}, we will be using CWS representation of stabilizer codes.
We consider a subclass of CWS codes where the classical code is linear. It is known that such a code is necessarily equivalent to a stabilizer code. 

Theorem \ref{thm:intro_CWS_hardness} is proved by reducing the classical minimum distance problem to the quantum minimum distance problem.
That is, given an arbitrary classical linear code $\C$ we want to construct a quantum code $\Q$, such that the distance of $\C$ is equal to that of $\Q$.
In the CWS framework, a quantum code $\Q=(G,\C)$ is constructed using a classical code $\C$ and a graph $G$. The distance of $\Q$ depends not only on $\C$ but also on $G$; hence an arbitrary choice of $G$ would not work for the reduction to the classical case. For example, by Fact~\ref{fc:qdist-ub}, if $\C$ uses all its components, then $\qdist(G,\C) \leq \min \lbrace \delta+1 , \dist(\C) \rbrace$ where $\delta$ is the minimum degree of $G$. The upper bound of $\delta+1$ comes from a property of the graph called its graph state distance $\Ddist(G)$ (that we define formally in Section \ref{sec:prelim}) that is relevant for error correction. For our reduction, it is useful if $\Ddist(G)$ is close to its maximum possible value of $\delta+1$.

In the following lemma, we give a sufficient condition on a graph $G$, so that its graph state distance is very close to its maximum value.
\begin{lemma} \label{lemma:intro_Ddist_restruct}
If $G$ is a graph with minimum degree $\delta$ and no 4-cycles, then $\Ddist(G)$ is either $\delta$ or $\delta+1$.
\end{lemma}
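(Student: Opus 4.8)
The plan is to first recast $\Ddist(G)$ in purely combinatorial terms. Writing a Pauli error as $Z^a X^b$, one checks that $\Cl_G(Z^a X^b)=a+\Gamma b$ over $\bbF_2$, where $\Gamma$ is the adjacency matrix of $G$; hence the errors with $\Cl_G(E)=0$ are exactly the nontrivial elements of the graph-state stabilizer, indexed by a nonempty $T\subseteq V(G)$, and the weight of such an element equals $|T\cup S_T|$, where $S_T=\{\,j:\ |N(j)\cap T|\text{ is odd}\,\}$ is the set of vertices having an odd number of neighbors in $T$. Thus $\Ddist(G)=\min_{\emptyset\neq T}|T\cup S_T|$. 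The upper bound $\Ddist(G)\le\delta+1$ is then immediate by taking $T=\{v_0\}$ for a vertex $v_0$ of degree $\delta$, giving $T\cup S_T=\{v_0\}\cup N(v_0)$; so the whole content is the lower bound $\Ddist(G)\ge\delta$.

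For the lower bound I would fix a nonempty $T$ attaining the minimum and any $v\in T$, and call a neighbor $u\in N(v)$ \emph{bad} if $u\notin T\cup S_T$. A bad $u$ satisfies $u\notin T$ and $|N(u)\cap T|$ even and nonzero, so it has a ``witness'' neighbor $w\in T\setminus\{v\}$. Here the hypothesis enters through the equivalent statement that, in a $4$-cycle-free graph, \emph{any two vertices have at most one common neighbor}: two bad neighbors $u,u'$ of $v$ sharing a witness $w$ would be two common neighbors of $v$ and $w$, a $4$-cycle, so distinct bad neighbors have distinct witnesses, and each bad $u$ has at most one witness lying inside $N(v)$ (a common neighbor of $v$ and $u$). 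Choosing witnesses outside $N(v)$ whenever possible, every bad $u$ except those forming a ``bad triangle'' at $v$ contributes a distinct new vertex of $T\setminus N[v]\subseteq T\cup S_T$. Collecting $\{v\}$, the good neighbors $N(v)\cap(T\cup S_T)$, and these new witnesses as disjoint subsets of $T\cup S_T$, I obtain $|T\cup S_T|\ge \deg(v)+1-c(v)$, where $c(v)$ counts the neighbors $w\in N(v)\cap T$ for which the edge $vw$ has a common neighbor lying outside $T\cup S_T$.

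It then remains to find a vertex $v$ with $c(v)$ small. Since $c(v)\le\deg_{G[T]}(v)$, I would pick $v$ minimizing its degree $\rho$ in the induced subgraph $G[T]$. If $\rho\le1$ we are done, as $|T\cup S_T|\ge\deg(v)+1-\rho\ge\delta$. If $\rho\ge2$, then $G[T]$ has minimum degree at least $2$, and a Moore-type count using $4$-cycle-freeness (the $\rho$ neighbors of $v$ in $T$ share only $v$ pairwise and span at most a matching among themselves, forcing $\rho(\rho-2)$ further distinct vertices) gives $|T|\ge\rho^2-\rho+1$, so $|T\cup S_T|\ge|T|$ is already large once $\rho$ is. The delicate range is small $\rho\ge2$, where both estimates can fall exactly one short of $\delta$, and this is the main obstacle. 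To close it I would analyze the bad-triangle edges directly: they form a subgraph $H$ of $G[T]$ that is both triangle-free (two such neighbors of $v$ cannot be adjacent, else a $4$-cycle through $v$) and $4$-cycle-free, hence of girth at least $5$, and each of its edges carries a \emph{private} common neighbor outside the support. I expect the extra distinct external vertices supplied by these edges, together with the degree $\ge\delta$ forced on the witness vertices themselves, to recover the missing $+1$ and rule out $|T\cup S_T|=\delta-1$; this is consistent with the observation that the naive tight configuration (a triangle inside $T$ together with an external common neighbor) already creates a forbidden $4$-cycle.
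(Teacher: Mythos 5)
Everything through your case $\rho\le 1$ is correct: the identity $\Ddist(G)=\min_{\emptyset\neq T}|T\cup S_T|$, the upper bound via $T=\{v_0\}$, the witness count giving $|T\cup S_T|\ge\deg(v)+1-c(v)$ with $c(v)\le\deg_{G[T]}(v)$, and the Moore-type bound $|T|\ge\rho^2-\rho+1$ are all sound. The gap is the case $\rho\ge 2$, which is where the real content of the lemma lies, and your assessment of what remains (``recover the missing $+1$'') is quantitatively wrong. Your two bounds give $|T\cup S_T|\ge\max\{\delta+1-\rho,\ \rho^2-\rho+1\}$, and the minimum of this over $\rho$ occurs near $\rho=\sqrt{\delta}$, where it is only about $\delta+1-\sqrt{\delta}$: for $\delta=16$ and $\rho=4$ both bounds give $13$, a shortfall of $3$, and in general the shortfall grows like $\sqrt{\delta}$. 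Moreover, the sketched repair does not close this. If $|T\cup S_T|=m<\delta$, then every $v\in T$ satisfies $c(v)\ge\delta+1-m$, so your bad-triangle subgraph $H\subseteq G[T]$ has minimum degree at least $\delta+1-m$ and girth at least $5$ (granting your correct observation that bad-triangle edges cannot lie in triangles of $G$); the girth-$5$ Moore bound gives $|T|\ge(\delta+1-m)^2+1$, and feeding this back into $m\ge|T|$ again only yields $m\ge\delta+1-O(\sqrt{\delta})$ --- the same shortfall. The ``private'' common neighbours of bad-triangle edges lie \emph{outside} $T\cup S_T$ by definition, so by themselves they contribute nothing to the quantity you must lower-bound, and you never specify how the degree of the witness vertices would be used. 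So your final paragraph is not a finishing touch but an unproved claim covering the hardest regime of the lemma.

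For comparison, the paper does not argue locally around one well-chosen vertex at all. It views a stabilizer element with $\Cl_G(E)=0$ as a set $S$ of columns of $(\Id\,|\,A_G)$ summing to zero --- in its language an ATOM set (pairwise supports meet in at most one coordinate) with degree gap $\delta$ (every vector has weight $1$ or at least $\delta$) --- and proves the stronger statement $\max\{|S_1|,|S_\delta|\}\ge\delta$, i.e.\ $\max\{|S_T|,|T|\}\ge\delta$ in your notation, which your estimates also do not imply. The regime corresponding to your $\rho\ge2$ (there, $\delta+2\le|S|\le 2\delta-1$ with both parts of size at least $2$) is handled by a global induction on the difference between $|S|$ and the degree gap: one merges an intersecting pair, replacing a weight-$\ge\delta$ vector $v$ and a weight-$1$ vector $e$ by $v\oplus e$, which lowers the induction parameter while preserving the ATOM and zero-sum properties, with the base case being an exact characterization of the extremal sets of size $\delta+1$. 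To salvage your approach you would need a mechanism that converts the forced abundance of bad triangles at \emph{every} vertex of $T$ into additional vertices of $T\cup S_T$ itself, or else adopt a global, inductive argument of this kind.
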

Using this lemma, and an appropriate family of graphs, we can construct a family of stabilizer codes such that $\qdist(\Q=(G,\C))=\dist(\C)$,
and thus prove Theorem~\ref{thm:intro_CWS_hardness}.

Since the distance of the quantum code we constructed in the reduction is equal to the distance of the classical code, the same reduction also works for the approximate versions of the respective problems. However, the additive factors in the additive approximation are different in the classical and quantum case: the additive approximation problem considered by \cite{MDS07} is to distinguish between the cases that a classical code has distance at most $t$ or at least $t + \tau\cdot n$, for $\tau > 0$. Whereas we consider the problem of distinguishing between the CWS code having distance at most $t$ or at least $t+\tau \cdot \sqrt{n}$. This is because the quantum problem we reduce the classical problem to will have distance $O(\sqrt{n})$, due to the graph having to be free of 4-cycles. Moreover, the randomized reductions for the approximate problems in  Theorem~\ref{thm:intro_CWS_hardness} are of the same type as those considered in the classical result \cite{MDS07}; we expand on this more in Section \ref{sec:prelim}. 

\subsubsection*{Organization of the paper}
The rest of the paper is organized as follows: In Section \ref{sec:prelim} we describe classical and quantum codes, in the CWS and stabilizer framework, in more detail, as well as state the exact and approximate versions of the minimum distance problem.
Section~\ref{sec:diag} is devoted to proving a lower bound on the graph distance, which, as stated before, will be our main tool for proving our results.
In section~\ref{sec:hardness}, we state and prove the formal versions of Theorem~\ref{thm:intro_CWS_hardness}. 

\addtocounter{theorem}{-1}
\addtocounter{lemma}{0}

\section{Preliminaries}\label{sec:prelim}
\subsection{Classical codes}
A linear error-correcting code $\C$ is a $k$-dimensional linear subspace of an $n$-dimensional space, for $k \leq n$. Elements of the code $\mathrm{C}$ are called \emph{codewords}.
In this paper we shall only consider vector spaces over $\bbF_2$, which is the alphabet of the code.
A codeword $c \in \C$ is, thus, a bit string of length $n$ and will be represented as $c_1,c_2,\ldots, c_n$.
We say that a code $\C$ uses all its components if for every $i \in [n]$, there exists a codeword $c \in \C$ such that $c_i \neq 0$.

The \emph{minimum distance}, or simply \emph{distance}, of a code is the minimum Hamming distance between two distinct codewords $u$ and $v$, i.e., the number of components in the vector $u\oplus v$ that have non-zero entries. For a linear code $\C$, the minimum distance is the same as the minimum Hamming weight of a non-zero codeword (note that the zero vector is always a codeword), and it will be denoted by $\dist(\C)$. A $k$-dimensional linear code with distance $d$, sitting inside an $n$-dimensional subspace is denoted as $\cbra{n,k,d}$ code (when we do not explicitly want to refer to distance, we shall call it an $[n,k]$ code).
Note that an error is a vector $e \in \mathbb{F}_2^n$ takes a codeword $x$ to $x\oplus e$. The code does not detect the error $e$ if $x\oplus e$ is also a codeword; since the code is linear, this means that an undetectable error $e$ is a (non-zero) codeword. Therefore, $\dist(\C)$ is in fact the minimum Hamming weight of an error that is not detected by the code. A code may not actually be able to correct all the errors that it detects, but it can correct errors of weight up to $\lfloor(\dist(\C)-1)/2\rfloor$.

Since a code $\C$ is a $k$-dimensional subspace of an $n$-dimensional space, it can be thought of as the kernel of a matrix $H \in \mathbb{F}_2^{(n-k)\times n}$, which is called its \emph{parity check matrix}. The code is completely specified by this matrix; we shall use $\C(H)$ to denote the code specified by the parity check matrix $H$. The distance of a code is then the minimum Hamming weight of a non-zero vector $u$ such that $Hu = 0^{n-k}$. We shall formally define the minimum distance problem for a classical code in these terms.

\subsection{Quantum codes}\label{sec:Qcodes-prelim}
An $(\!( n,K )\!)$ quantum error-correcting code $\Q$ is a $K$-dimensional subspace of $(\bbC^2)^{\otimes n}$. 
For the purpose of this work, we will consider $K=2^k$, unless stated otherwise. 
Similarly, the errors we consider are  Pauli operators.
The Pauli operators, given by
\[ \{X(a)Z(b) = X^{a_1}Z^{b_1}\otimes\ldots\otimes X^{a_n}Z^{b_n} : a, b \in \{0,1\}^n\},\]
where $X$ and $Z$ are the single qubit Pauli $X$ and $Z$ matrices, form a basis for matrices acting on $(\bbC^2)^{\otimes n}$. In particular, any error can be expanded in the Pauli basis. The support of a Pauli operator is the set of components $i$ such that $a_i \lor b_i=1$. For a general error, its support is the union of the supports of the Pauli operators in its expansion. The \emph{weight} of an error is then defined as the cardinality of its support.

Analogous to classical codes, the distance of a quantum code is defined as the minimum $d$ such that the code can detect errors of weight up to $d-1$, and correct errors of weight up to $\lfloor(d-1)/2\rfloor$. 
The Knill-Laflamme condition of error correction \cite{Laflamme} states that a Pauli error $E$ can be detected by a quantum code $\Q$ spanned by an orthonormal basis $\{\ket{c_i}\}_{j=1}^K$ iff for all $i, j$
\begin{equation}\label{eq:kl-cond}
\matel{c_i}{E}{c_j} = \delta_{ij}f(E)
\end{equation}
where $f$ is a function that depends only on the error $E$ (and not $\ket{c_i}, \ket{c_j}$). Therefore, the distance $\qdist(\Q)$ of a code $\Q$ is the minimum weight of an error $E$ that violates \eqref{eq:kl-cond}.\footnote{Note that there may still be some errors of weight greater than $\qdist(\Q)-1$ that are detectable by the code, just not all such errors.}
For $i\neq j$, equation \eqref{eq:kl-cond} ensures that an error acting on a codeword does not cause it to have overlap with another codeword, so that they can be perfectly distinguished --- this condition is analogous to what is required for classical codes. The condition for $i=j$, requiring that all diagonal elements of an error with respect to the codewords should be the same in order for the code to be detectable, is unique to quantum codes. The value $f(E)$ of the diagonal elements determines an important property of quantum codes known as \emph{degeneracy}.
\begin{itemize}
\item If $f(E)=0$ for all errors $E \neq \Id$ of weight up to $\qdist(\Q)-1$, then the code is called \emph{non-degenerate}.
\item If there exists some error $E \neq \Id$ of weight up to $\qdist(\Q) - 1$ for which $f(E) \neq 0$, then the code is called \emph{degenerate}.
\end{itemize}
In the rest of the paper, we shall only deal with CWS quantum codes. In these codes, the only allowed values of $f(E)$ are $0$ and $\pm 1$, and $f(E) \neq 0$. See Claim~\ref{claim:degen_in_CWS_via_Cl_G} in section~\ref{subsec:CWS} for more details.

\subsubsection{Stabilizer formalism and graph states \label{subsec:graph_states}}
Stabilizer codes are one of the most commonly studied classes of quantum codes.
A stabilizer code is a linear subspace of $(\bbC^2)^{\otimes n}$ that is the simultaneous $+1$ eigenspace of all elements of a \emph{stabilizer group}. A stabilizer group is an abelian subgroup of the Pauli group that does not contain $-\Id$. It can be shown that such a subgroup must always have a simultaneous $+1$ eigenstate.

A stabilizer code is completely specified by a minimal set of generating elements of its corresponding stabilizer group. If the size of a minimal generating set for the stabilizer group is $n-k$, then the corresponding stabilizer code is $2^k$-dimensional. Elements of the Pauli group are of the form $X(a)Z(b) = X^{a_1}Z^{b_1}\otimes\ldots X^{a_n}Z^{b_n}$, along with an overall phase of $\pm 1$ or $\pm i$.
Adding a $\pm 1$ or $\pm i$ phase to any stabilizer, changes the code to a code that is isomorphic to the original code.
Additionally, one can represent any element of the Pauli group (up to this isomorphism) as $X\left(a \right) Z\left(b\right)$~\cite[Section 7.9.2]{Preskill_lecture}.
Hence, we need not keep track of phases.
In other words, without loss of generality, one can just focus on stabilizer generators of the form $X(a)Z(b)$.
We shall represent each generator only by the string $(a|b)$. An $\qbra{n,k}$ stabilizer code is specified by a matrix $S \in \bbF_2^{(n-k)\times 2n}$, where each row of the matrix is interpreted as a string of the form $(a|b)$, and corresponds to a generator. Two rows $(a_1|b_1)$ and $(a_2|b_2)$ of this matrix are required to satisfy $a_2b_1-a_1b_2 = 0$, since the stabilizer group has to be abelian.
This representation is called the $\mathbb{F}_2$-representation or the symplectic notation \cite[Section 7.9.2]{Preskill_lecture}.

Another common way of obtaining a stabilizer code is via a graph. Let $G$ be a simple graph on $n$ vertices and with the adjacency matrix $A_G$ having columns $\lbrace u^{i}\rbrace_{i=1}^{n}$.
One can define the following set of Pauli matrices corresponding to the graph $G$;
\begin{align} \label{eq:standard_graph_stabilizer}
S_i= X_i Z^{u^i} & \hspace{5mm}  \text{ for } \hspace{2mm} 1 \leq i \leq n
\end{align}
Note that all the $S_i$ defined above commute, generating a stabilizer group.
Such a stabilizer group stabilizes a unique state $\ket{s}$, that is, there is a unique $\ket{s}$, such that $S_i\ket{s}=\ket{s}$ for all $i$. 
This $\ket{s}$ is called the graph state (corresponding to graph $G$). 
The distance $d^\prime$ of a graph state is the minimum weight of a non-trivial element of the stabilizer group $S= \langle S_1,S_2,\ldots, S_n\rangle$ \cite{KDP11}.
Thus a graph state is a one-dimensional non-degenerate stabilizer code with distance $d^\prime \leq \delta+1$, where $\delta$ is the minimum vertex degree of $G$. The upper bound on $d^\prime$ follows, as $S_{i}$ has weight $\deg(v_i)+1$. 
Given a graph $G$, stabilizer generators $S_i$ as given by equation~\eqref{eq:standard_graph_stabilizer}, involving a single $X$ operator and $Z$ according to  adjacency matrix of graph $G$, will be referred to as the \emph{standard form} of stabilizers for graph $G$.
Given a graph state $\ket{s}$, the set $\lbrace Z(a)\ket{s} \rbrace_{a \in \lbrace 0,1 \rbrace^n}$ forms an orthonormal basis known as the \emph{graph basis}.

\subsubsection{CWS codes\label{subsec:CWS}}

CWS codes form a larger class of quantum codes that includes stabilizer codes\cite{CWS}.
An $\qbra{n,k}$ CWS code is described by a graph $G$ with $n$ vertices and an $[n,k]$ classical error-correcting code $\C$. In general, the classical code part of a CWS code does not have to be a linear code. But in this work, we shall only concern ourselves with the case when $\C$ is linear. CWS codes with $\C$ being linear are exactly equivalent to stabilizer codes \cite{CWS}.

Formally, one can define CWS codes via any maximal stabilizer group $S$ and a set of $W$ of Pauli operators called as \emph{word operators}. Let $S$ be a maximal stabilizer group (of size $2^{n}$) and $W=\lbrace w_l \rbrace_{l=1}^{K}$ be $n$-qubit Pauli elements.
Let $\ket{s}$ be the unique state stabilized by $S$.
Then the code spanned by $\ket{w_l}$ defined as follows:
\[\ket{w_l}=w_l \ket{s}.\] 
As one wants codewords to be distinct,
only one $w_i$ can be in $S$.
Typically, it is chosen to be $w_1=\mathbb{1}$.
Using the local-Clifford equivalence of $S$ with graph state stabilizers~\cite{CWS}, one can define CWS code in the \emph{standard form} with a graph $G$ and furthermore, word operators can be chosen from a classical code $\C$. 
This gives the following definition, which we will be mostly concerned with throughout this work.
\begin{definition}[CWS$(G,\C)$] Let $G$ be a graph on $n$ vertices and $\C \subseteq \mathbb{F}_2^n$ be a classical code.
\begin{itemize}
\item Let $S_i$ be the stabilizer generators (see equation~\eqref{eq:standard_graph_stabilizer}) corresponding to the graph $G$, and $S$ be the the stabilizer group generated by them. 
Let $\ket{s}$ be the state stabilized by $S$.
\item Corresponding to each codeword $c \in \C$, define a word operator $w_c=Z(c)$.
\end{itemize} 
Then the CWS code CWS$(G,\C)$ is the code spanned by the vectors $\lbrace Z(c)\ket{s}\rbrace_{c \in \C}$.
\end{definition}

The idea behind CWS codes is that any Pauli error $X(a)Z(b)$ acting on a basis codeword of a CWS code is equivalent up to sign to an error of the form $Z(b')$ acting on the same codeword. These $Z$ errors can then be corrected by the classical error-correcting code part of the CWS code.\footnote{The `obvious' quantum encoding of a classical code as quantum states would be as computational basis elements. Then classical errors are actually of the form $X(a')$. However, if the encoding is done in the Hadamard basis instead, then the classical code can deal with errors of the form $Z(b')$.}

More specifically, let $\{u^i\}_{i=1}^n$ be the columns of the adjacency matrix $A_G$ of the graph $G$ associated with a CWS code. For a basis codeword $\ket{c}$ of a CWS code and $E=X(a)Z(b)$, we have
\begin{equation}\label{eq:CWS-E-sign}
E\ket{c}= \pm Z(\Cl_G(E))\ket{c}.
\end{equation}
The sign in the above equation depends on the error, but crucially, not on the codeword), and the function $\Cl_G$ is defined as
\begin{equation}
\Cl_G(E) = b \oplus \left(\bigoplus_{i=1}^na_iu^i\right).
\end{equation}
Intuitively speaking, this means that an $X$ error on the $i$-th qubit in $E$ propagates down the edges of the $i$-th vertex of the graph $G$, and acts as a $Z$ error on all the neighbours of the $i$-th vertex. Note that $\Cl_G(E)$ can be the zero vector, even if $E$ is a nontrivial Pauli error.

We shall use the following result about CWS codes proved in \cite{CWS}.
\begin{fact}[Theorem 3 in \cite{CWS}]\label{fc:CWS-detect}
A CWS code $(G,\C)$, where $\C$ is a classical code with codewords $\{c_j\}_j$ and $G$ is a graph whose adjacency matrix has columns $\{u^i\}_i$, detects a Pauli error $E$ if and only if the following conditions are met:
\begin{enumerate}[(i)]
\item If $\Cl_G(E) \neq 0^n$, then it must be detectable by $\C$ (i.e., it must not be a non-zero codeword of $\C$)
\item If $\Cl_G(E) = 0^n$, then $E$ must satisfy $Z(c_j)E = EZ(c_j) \, \forall j$.
\end{enumerate}
\end{fact}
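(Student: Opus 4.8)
The plan is to verify the Knill--Laflamme condition \eqref{eq:kl-cond} directly for the basis codewords $\ket{c_j} = Z(c_j)\ket{s}$ of CWS$(G,\C)$ and to read off conditions (i) and (ii) from the resulting matrix elements. Writing $E = X(a)Z(b)$, the first step is to evaluate $E\ket{c_j}$ explicitly. Commuting $X(a)$ rightward past the $Z$-operators costs a sign $(-1)^{a\cdot(b\oplus c_j)}$, after which $X(a)\ket{s}$ is rewritten using the stabilizers: since $\prod_i S_i^{a_i}$ equals $X(a)Z(\tilde v)$ up to a phase $\gamma$ depending only on $a$ (and $G$), where $\tilde v = \bigoplus_i a_i u^i$, and this product fixes $\ket{s}$, one obtains $X(a)\ket{s} = \gamma\, Z(\tilde v)\ket{s}$. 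Collecting terms and using $\Cl_G(E) = b\oplus\tilde v$, I expect to reach
\begin{equation}
E\ket{c_j} = \eta_E\,(-1)^{a\cdot c_j}\,Z(\Cl_G(E))\ket{c_j},
\end{equation}
with $\eta_E = \gamma\,(-1)^{a\cdot b}$ a sign depending only on $E$. The crucial feature, which is what ultimately produces condition (ii), is the genuinely codeword-dependent factor $(-1)^{a\cdot c_j}$; one must not collapse the sign in \eqref{eq:CWS-E-sign} to a codeword-independent one.

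The second ingredient is the elementary fact that for the graph state,
\begin{equation}
\bra{s}Z(w)\ket{s} = [\,w = 0\,].
\end{equation}
This holds because $\bra{s}P\ket{s}$ is nonzero only when the Pauli $P$ lies, up to a phase, in the stabilizer group $S = \langle S_1,\dots,S_n\rangle$, and the only pure-$Z$ element of $S$ is $\Id$: any nonempty product $\prod_{i\in T}S_i$ has $X$-part $X(\mathbf{1}_T)\neq\Id$. Combining this with $\ket{c_i} = Z(c_i)\ket{s}$ and $Z(c_i)Z(\Cl_G(E))Z(c_j) = Z(c_i\oplus\Cl_G(E)\oplus c_j)$ yields
\begin{equation}
\matel{c_i}{E}{c_j} = \eta_E\,(-1)^{a\cdot c_j}\,[\,c_i\oplus c_j = \Cl_G(E)\,].
\end{equation}

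It then remains to compare against \eqref{eq:kl-cond} in two cases. When $\Cl_G(E)\neq 0^n$, all diagonal entries ($i=j$) vanish, so $f(E)=0$, and the condition reduces to requiring every off-diagonal entry to vanish; since $\C$ is linear, the differences $c_i\oplus c_j$ over distinct codewords are exactly the nonzero codewords, so this holds iff $\Cl_G(E)$ is not a nonzero codeword, which is condition (i). When $\Cl_G(E)=0^n$, all off-diagonal entries vanish automatically, while the diagonal entries equal $\eta_E(-1)^{a\cdot c_i}$; the condition demands these be equal across all $i$, and because $0\in\C$ this is equivalent to $a\cdot c_j = 0$ for every $j$. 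Finally, $a\cdot c_j = 0$ is exactly the statement that $Z(c_j)$ commutes with $E$, since $Z(c_j)E = (-1)^{a\cdot c_j}EZ(c_j)$, giving condition (ii).

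The main obstacle I anticipate is the sign bookkeeping in the first step. The whole content of condition (ii) lives in the factor $(-1)^{a\cdot c_j}$, so the derivation of $E\ket{c_j}$ must keep this codeword-dependent phase rather than absorbing it into a single $\pm$. Once the phase $\gamma$ in $X(a)\ket{s} = \gamma\,Z(\tilde v)\ket{s}$ is recognized as depending only on $a$, both the stabilizer rewriting and the graph-state expectation are routine, and the two-case comparison with \eqref{eq:kl-cond} is then immediate.
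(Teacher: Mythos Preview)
The paper does not supply its own proof of this statement; it is quoted as Theorem~3 of \cite{CWS} and used as a black box. Your argument is correct and is essentially the standard derivation: compute $\matel{c_i}{E}{c_j}$ explicitly from the graph-state structure and read the Knill--Laflamme condition off the resulting expression $\eta_E(-1)^{a\cdot c_j}[c_i\oplus c_j=\Cl_G(E)]$. You have handled the one delicate point correctly, namely keeping the codeword-dependent phase $(-1)^{a\cdot c_j}$ separate from the codeword-independent sign $\eta_E$; this is exactly what is needed for condition~(ii). Your appeals to linearity of $\C$ (that $0\in\C$, and that the set of differences $c_i\oplus c_j$ with $i\neq j$ coincides with the nonzero codewords) are consistent with the paper's standing assumption that $\C$ is linear, which is already implicit in the parenthetical of condition~(i).
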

Obviously, if the code detects Pauli errors from a set $\clE$, then it also detects errors that are linear combinations of Pauli errors from $\clE$. The distance of the code is then the smallest weight of Pauli error $E\neq \Id$ for which the conditions in Fact \ref{fc:CWS-detect} are not satisfied. We shall use this characterization of distance in most of our analysis in this paper.

For CWS codes, another important parameter is the  \emph{graph state distance} (or simply \emph{graph distance}) $\Ddist(\Q)$.
\begin{definition}
    Let $\Q= (\G,\C)$ be a CWS code. Then 
    \[\Ddist(\Q) = \min_{E \neq \Id} \lbrace \wt(E) \ : \ Cl_G(E)=0 \rbrace.\]
\end{definition}

Unlike $\qdist(\Q)$, the graph state distance of $\Q$ is a property determined entirely by its associated graph $G$. Hence sometimes we shall talk about $\Ddist(\Q)$ (without there necessarily being an associated code); with some abuse of notation, we shall also use $\Ddist(G)$ to refer to this. 
As the name suggests, the graph state distance  $\Ddist(G)$ matches the graph state distance defined in Section~\ref{subsec:graph_states} due to the following claim.
\begin{claim}
Let $G$ be a graph, with the corresponding stabilizer generators $S_i$ in the standard form. 
Let $d^\prime$ be the distance of graph state given by $G$.
Then
$d^\prime= \min_{E \neq \mathbb{1}} \lbrace \wt(E): \Cl_G{\left(E\right)}=0\rbrace$.
\end{claim}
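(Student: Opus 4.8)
The plan is to set up a weight-preserving correspondence, up to an irrelevant overall sign, between the non-identity elements of the stabilizer group $S=\langle S_1,\ldots,S_n\rangle$ and the non-identity Pauli operators $E$ with $\Cl_G(E)=0$, and then simply take minima on both sides.

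First I would write an arbitrary element of $S$ in symplectic form. Since $G$ is simple, the vertex $i$ is not among its own neighbours, so $X_i$ commutes with $Z^{u^i}$, giving $S_i^2=\Id$; together with the fact that all the $S_i$ commute, this means every element of $S$ is $S_a:=\prod_{i=1}^n S_i^{a_i}$ for a unique $a\in\{0,1\}^n$. Collecting the $X$-parts and $Z$-parts and absorbing the phases produced by commuting single-qubit $X$'s past $Z$'s into an overall sign, one obtains
\[
S_a=\pm\, X(a)\,Z\!\left(\bigoplus_{i=1}^n a_i u^i\right)=\pm\, X(a)\,Z(A_G a),
\]
where $A_G a=\bigoplus_i a_i u^i$ denotes the matrix--vector product over $\bbF_2$.

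Second, I would match this expression against the definition of $\Cl_G$. For a Pauli operator $E=X(a)Z(b)$ we have $\Cl_G(E)=b\oplus A_G a$, so $\Cl_G(E)=0$ holds exactly when $b=A_G a$, i.e.\ exactly when $E=X(a)Z(A_G a)=\pm S_a$. Hence $a\mapsto S_a$ is a bijection (ignoring phases) from $\{0,1\}^n$ onto the set of Pauli operators $E$ satisfying $\Cl_G(E)=0$, and it sends $a=0^n$ to the identity on both sides. In particular, $E\neq\Id$ corresponds precisely to $S_a$ being a non-trivial stabilizer element.

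Third, because the weight of a Pauli operator depends only on its support $\{i:a_i\lor b_i=1\}$ and not on any overall phase, we have $\wt(S_a)=\wt\bigl(X(a)Z(A_G a)\bigr)$. Taking the minimum over all $a\neq 0^n$ of both sides, the left-hand side is $d^\prime$ by definition of the graph state distance, and the right-hand side is $\min_{E\neq\Id}\{\wt(E):\Cl_G(E)=0\}$, which establishes the claim. The only point requiring any care is the phase bookkeeping in the first step, but since weight is phase-blind the precise sign is immaterial; I therefore expect no genuine obstacle, as the argument is essentially the standard dictionary between the symplectic representation of the stabilizer group and the classicalization map $\Cl_G$.
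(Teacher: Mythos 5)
Your proposal is correct and follows essentially the same route as the paper: both identify the stabilizer group (up to irrelevant signs) with the kernel of $\Cl_G$ in the symplectic representation and then take minima, the only cosmetic difference being that you write a general element explicitly as $S_a = \pm X(a)Z(A_G a)$, whereas the paper verifies the generators $S_j$ and extends by linearity. If anything, your version is slightly more complete, since the explicit bijection handles both inclusions of the ``if and only if,'' while the paper spells out only the direction $E \in S \Rightarrow \Cl_G(E)=0$.
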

\begin{proof}
We will show that $E =X(a)Z(b)$ is in $S$ if and only if $\Cl_G(E)=0$.
The claim then directly follows.
First, let us show this for $E=S_j= X(e^j) Z(u^j)$. \[\Cl_{G}(E)= u^{j} \oplus \bigoplus_{i=1}^{n} e^{j}_{i} u^{i} = u^{j} \oplus u^{j}=0.\]
The second equality follows since $e^{j}_{i}=\delta_{ij}$.
Thus, the claim holds for any $E=S_i$.
By linearity, one can extend this to any $E \in S$.
For this, note that if $E_1=X(a_1) Z_(b_1)$ and $E_2=X(a_2) Z(b_2)$ are such that $\Cl_{G}(E_1)=\Cl_G(E_2)=0$, then for $E_{3}=X(a_1\oplus a_2) Z(b_1 \oplus b_2)$, we get $\Cl_{G}(E_3)=0$. 
As any $E \in S$ can be written as $X(\oplus_{i\in {T_E}} e^i) Z(\oplus_{i \in {T_E}} u^{i})$ (for some $T_E \subset \lbrace 1,2,\ldots,n\rbrace$), this completes the proof.   
\end{proof}
The following claim relates quantity $f(E)$ in the Knill-Laflamme condition to $\Cl_{G}(E)$.
\begin{claim}
\label{claim:degen_in_CWS_via_Cl_G}
Let $\Q$ be CWS codes constructed from graph $G$ and code $\C$. 
Let $E$ be an error such that $\wt(E) \leq \qdist(\Q)-1$. Then,
\begin{enumerate}
\item $\Cl_G(E)=0$ if and only if $f(E) = \pm 1$.
\item $\Cl_{G}(E) \neq 0$ if and only if $f(E) = 0$ 
\end{enumerate}
In particular, only possible values of $f(E)$ are $\pm 1\text{ and }0$.
\end{claim}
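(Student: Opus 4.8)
The plan is to compute the diagonal matrix element $\matel{c_i}{E}{c_i}$ explicitly in terms of $\Cl_G(E)$ and read off $f(E)$ from it. Two ingredients from the preceding discussion do all the work: equation~\eqref{eq:CWS-E-sign}, which gives $E\ket{c} = \sigma(E)\,Z(\Cl_G(E))\ket{c}$ for a sign $\sigma(E) \in \{+1,-1\}$ depending only on $E$; and the orthonormality of the graph basis $\{Z(a)\ket{s}\}_a$, which I will use in the equivalent form $\bra{s}Z(v)\ket{s} = \delta_{v,0^n}$.

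First I would note that since $\wt(E) \leq \qdist(\Q)-1$, the error $E$ does not violate the Knill--Laflamme condition~\eqref{eq:kl-cond}; hence $E$ is detectable and $f(E)$ is well-defined as the common value of the diagonal elements $\matel{c_i}{E}{c_i}$. Writing each basis codeword as $\ket{c_i} = Z(c_i)\ket{s}$ and applying~\eqref{eq:CWS-E-sign}, I would obtain
\[
\matel{c_i}{E}{c_i} = \sigma(E)\,\bra{s}Z(c_i)\,Z(\Cl_G(E))\,Z(c_i)\ket{s} = \sigma(E)\,\bra{s}Z(\Cl_G(E))\ket{s},
\]
using that the $Z$ operators commute, $Z(a)Z(b) = Z(a \oplus b)$, and $c_i \oplus c_i = 0^n$. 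Orthonormality of the graph basis then gives $\matel{c_i}{E}{c_i} = \sigma(E)\,\delta_{\Cl_G(E),\,0^n}$, so that $f(E) = \sigma(E)\,\delta_{\Cl_G(E),\,0^n}$.

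From this single identity both parts follow at once: if $\Cl_G(E) = 0^n$ then $f(E) = \sigma(E) \in \{+1,-1\}$, and if $\Cl_G(E) \neq 0^n$ then $f(E) = 0$. Since the hypotheses $\Cl_G(E) = 0^n$ and $\Cl_G(E) \neq 0^n$ are exhaustive and mutually exclusive, these two implications also yield the two converses and show that $f(E)$ takes no value other than $0$ or $\pm 1$. I do not expect a genuine obstacle here, as the argument is a short computation; the only point needing care is the logical role of the weight hypothesis, which is exactly what guarantees detectability and hence the well-definedness of $f(E)$. As a consistency check one can run the same manipulation on the off-diagonal terms to get $\matel{c_i}{E}{c_j} = \sigma(E)\,\delta_{c_i \oplus c_j,\,\Cl_G(E)}$ for $i \neq j$; detectability forces these to vanish, confirming that the full Knill--Laflamme form $\matel{c_i}{E}{c_j} = \delta_{ij} f(E)$ is reproduced.
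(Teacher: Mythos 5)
Your proof is correct and follows essentially the same route as the paper's: both rest on equation~\eqref{eq:CWS-E-sign} together with the orthonormality of the graph basis, and on the weight hypothesis guaranteeing that $f(E)$ is well-defined. The only cosmetic difference is that the paper evaluates the diagonal element directly on the codeword $\ket{s}$ itself, whereas you work with a general codeword $Z(c_i)\ket{s}$ and observe that the conjugating $Z(c_i)$ factors cancel.
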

\begin{proof}
Let $\ket{s}$ be the graph state corresponding to $G$, which we note is a codeword for the CWS code.
Since, $\wt(E)\leq \qdist(\Q)-1$, $E$ satisfies the Knill-Laflamme condition.  $\langle c\vert E \vert c\rangle$ thus has the same value for every codeword, and by definition, this value is equal to $f(E)$. In particular, we have,
\begin{align*}
    f(E)&= \langle s \vert E \vert s\rangle \\
    &= \pm \langle s \vert Z\left( \Cl_{G}(E)\right) \vert s\rangle,
\end{align*}
 where for the last equality we have used the fact that $E\ket{s}= \pm Z\left(\Cl_{G}(E) \right) \ket{s}$ by \eqref{eq:CWS-E-sign}. Suppose $\Cl_G(E)=0$, then $Z\left(\Cl_{G}(E)\right) = \mathbb{1}$ and hence, $f(E)=\pm 1$.
 Otherwise, $Z\left(\Cl_{G}(E)\right)= Z(a \neq 0)$. 
 Since $Z(a) \ket{s}$ forms an orthonormal graph basis, $\ket{s}$ is orthogonal to $Z(a)\ket{s}$ for any $a \neq 0$. Thus, in this case, we get $f(E)=0$.
\end{proof}
 
The following upper bound on graph state distance and distance of CWS codes holds for all graphs \cite{KDP11}; we provide a proof for completeness.

\begin{fact}\label{fc:diag-dist-ub}
Suppose $G$ is a graph with $\mindeg(G)=\delta$, and $\C$ is a classical code that uses all its components, i.e., $\forall i\in[n], \exists c \in \C$ s.t. $c_i \neq 0$. Then for the CWS code $\Q=(G,\C)$, $\qdist(\Q), \Ddist(\Q) \leq \delta + 1$.
\end{fact}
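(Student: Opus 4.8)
The plan is to exhibit a single explicit Pauli error of weight $\delta+1$ that simultaneously witnesses both upper bounds. Let $v$ be a vertex of $G$ of minimum degree $\delta$, and consider the standard-form stabilizer generator $S_v = X_v Z^{u^v}$ from equation~\eqref{eq:standard_graph_stabilizer}, where $u^v$ is the $v$-th column of the adjacency matrix. Since $G$ is simple and has no self-loops, $(u^v)_v = 0$, so $S_v$ acts as $X$ on vertex $v$ and as $Z$ on each of the $\delta$ neighbours of $v$; its support is therefore $\{v\}$ together with the neighbours of $v$, giving $\wt(S_v) = \delta + 1$. This is the error I would use for both parts.

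For the bound $\Ddist(\Q) \leq \delta + 1$, I would invoke the computation already carried out in the proof of the claim relating the graph state distance to $\Cl_G$: every standard-form generator satisfies $\Cl_G(S_i) = 0$, and in particular $\Cl_G(S_v) = u^v \oplus u^v = 0$. Since $S_v \neq \Id$ and $\wt(S_v) = \delta + 1$, the definition of $\Ddist$ as the minimum weight of a nontrivial error in the kernel of $\Cl_G$ immediately gives $\Ddist(\Q) \leq \delta + 1$.

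For the bound $\qdist(\Q) \leq \delta + 1$, I would show that $S_v$ is \emph{not} detected by the CWS code, using the characterisation in Fact~\ref{fc:CWS-detect}. Because $\Cl_G(S_v) = 0$, condition (i) is vacuous and detection hinges entirely on condition (ii), namely that $Z(c)$ commutes with $S_v$ for every codeword $c \in \C$. The key calculation is the commutator: the $Z$-part of $S_v$ commutes with $Z(c)$ trivially, while $X_v$ anticommutes with $Z(c)$ exactly when $c_v = 1$, so that $Z(c)\,S_v = (-1)^{c_v} S_v\,Z(c)$. This is where the hypothesis that $\C$ uses all its components enters: applied to the index $v$, it furnishes a codeword $c \in \C$ with $c_v = 1$, for which $Z(c)\,S_v = -\,S_v\,Z(c) \neq S_v\,Z(c)$. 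Thus condition (ii) fails, $S_v$ is undetectable, and since $\wt(S_v) = \delta + 1$ we conclude $\qdist(\Q) \leq \delta + 1$.

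I do not expect any serious obstacle here; the argument is short once the right witness is chosen. The only point requiring care is to verify that the single generator $S_v$ does double duty --- it lies in the kernel of $\Cl_G$ (handling $\Ddist$) and simultaneously fails the commutation condition against the code (handling $\qdist$) --- and to confirm that the ``uses all components'' hypothesis is precisely what guarantees the existence of a codeword with $c_v = 1$. Without that hypothesis one could have $c_v = 0$ for every codeword, in which case $S_v$ would commute with all $Z(c)$ and be detected, so the assumption cannot be dropped.
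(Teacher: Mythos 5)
Your proposal is correct and follows essentially the same route as the paper's own proof: the paper also takes the witness $E = X(e^i)Z(u^i)$ (your $S_v$) at a position $i$ used by the code, notes $\Cl_G(E) = u^i \oplus u^i = 0$ to bound $\Ddist(\Q)$, and derives $Z(c)E = -EZ(c)$ from $c_i = 1$ to violate condition (ii) of Fact~\ref{fc:CWS-detect} and bound $\qdist(\Q)$. The only cosmetic difference is that the paper states the argument for an arbitrary used component $i$ (giving weight $\deg(v_i)+1$) and then specializes to a minimum-degree vertex, whereas you fix the minimum-degree vertex from the start.
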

\begin{proof}
For an $i$ for which there exists $c\in \C$ s.t. $c_i \neq 0$, we shall show that there exists an error $E$ of weight $\deg(v_i)+1$ ($v_i$ being the $i$-th vertex in $G$) such that $\Cl_G(E)=0^n$, and $Z(c)E \neq EZ(c)$. By Fact \ref{fc:CWS-detect}, this proves the upper bound on both $\qdist(\Q)$ and $\Ddist(\Q)$.

Let $E=X(e^i)Z(u^i)$, where $u^i$ is the $i$-th row in $A_G$, corresponding to the neighbours of $v_i$; The weight of $E$ is $\deg(v_i)+1$, since $u^i$ has 1s in $\deg(v_i)$ locations, excluding the $i$-th location where $e^i$ has a 1. Clearly $\Cl_G(E)=u^i\oplus u^i=0^n$. Moreover, since $c$ has a 1 in the $i$-th location, $Z(c)$ anti-commutes with $X(e^i)$. Therefore we have, $Z(c)E = Z(c)X(e^i)Z(u^i) = -X(e^i)Z(c)Z(u^i) = -X(e^i)Z(u^i)Z(c) = -EZ(c)$. This completes the proof.
\end{proof}
Using condition (i) of Fact \ref{fc:CWS-detect}, the above fact then has the following corollary.
\begin{fact}\label{fc:qdist-ub}
Suppose $G$ is a graph with $\mindeg(G)=\delta$ and $\C$ is a classical code that uses all its components. Then for $\Q=(G,\C)$, $\qdist(\Q)\leq \min\{\delta+1,\dist(\C)\}$.
\end{fact}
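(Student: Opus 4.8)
The plan is to leverage Fact~\ref{fc:diag-dist-ub}, which already supplies the bound $\qdist(\Q)\le\delta+1$, so that the only remaining task is to exhibit an undetectable error of weight exactly $\dist(\C)$; this will give $\qdist(\Q)\le\dist(\C)$, and taking the minimum of the two bounds finishes the proof. Note that the hypothesis that $\C$ uses all its components guarantees in particular that $\C$ has a nonzero codeword, so $\dist(\C)$ is a well-defined finite quantity.

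To construct the undetectable error, I would start from a minimum-weight nonzero codeword $c^\ast\in\C$, so that $\wt(c^\ast)=\dist(\C)$. I then consider the pure $Z$-type Pauli error $E=Z(c^\ast)$, i.e. the error $X(a)Z(b)$ with $a=0^n$ and $b=c^\ast$. Its support is exactly the support of $c^\ast$, so $\wt(E)=\dist(\C)$.

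The key computation is the classicalization of this error. Since the $X$-part of $E$ is trivial, the sum $\bigoplus_i a_i u^i$ vanishes and we obtain $\Cl_G(E)=c^\ast$. As $c^\ast$ is a nonzero codeword of $\C$, it fails the detectability requirement in condition (i) of Fact~\ref{fc:CWS-detect}: a nonzero value $\Cl_G(E)$ that is itself a codeword is not detectable by $\C$. Hence $E$ is an undetectable error of weight $\dist(\C)$, which yields $\qdist(\Q)\le\dist(\C)$. Combining this with $\qdist(\Q)\le\delta+1$ from Fact~\ref{fc:diag-dist-ub} gives $\qdist(\Q)\le\min\{\delta+1,\dist(\C)\}$.

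There is no serious obstacle here, as the statement is an almost immediate corollary. The only point worth double-checking is that choosing a pure $Z$-type error is precisely what makes the classicalization map land on $c^\ast$ itself: an $X$-component would instead propagate along the edges of $G$ and in general move $\Cl_G(E)$ away from $c^\ast$, so using an $X$-free error is essential to decouple this bound from the graph structure and read it off directly from $\dist(\C)$.
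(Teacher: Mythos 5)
Your proof is correct and matches the paper's approach: the paper derives this fact as an immediate corollary of Fact~\ref{fc:diag-dist-ub} together with condition (i) of Fact~\ref{fc:CWS-detect}, and the $Z(c)$ error you construct is exactly the witness the paper itself uses (explicitly, in the proof of Theorem~\ref{thm:q-NPcomp}) to show $\qdist(\Q)\leq\dist(\C)$.
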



\subsection{The minimum distance problem}

\begin{problem}[Classical minimum distance problem, $\MD$]\label{prob:cmindist}
\leavevmode

\begin{tabular}{p{1.5cm}p{5.6cm}}
\textsc{Instance:} & A matrix $H \in \bbF^{(n-k)\times n}$ and an integer $t > 0$ \\
\textsc{Yes:} & $\dist(\C(H)) \leq t$.\\
\textsc{No:} & $\dist(\C(H)) > t$\\
\end{tabular}
\end{problem}

It is clear that Problem \ref{prob:cmindist} is in NP; it was also shown to be NP-hard by Vardy.
\begin{fact}[\cite{Vardy}]\label{fc:c-NPcomp}
$\MD$ is NP-complete.
\end{fact}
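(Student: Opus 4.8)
The plan is to split the claim into its two halves, membership in NP and NP-hardness, with essentially all the difficulty in the latter. Membership is immediate: a nonzero vector $u \in \bbF_2^n$ satisfying $Hu = 0^{n-k}$ and $\wt(u) \le t$ is a polynomial-size witness, and checking that $u \neq 0$, that $Hu$ vanishes, and that $\wt(u) \le t$ all take polynomial time, so $\MD$ is in NP. For NP-hardness I would reduce from a coding problem already known to be NP-complete, namely the coset-weight (maximum-likelihood decoding) problem of \cite{BMT78}: given a parity-check matrix $H$, a syndrome $s$, and a bound $w$, is there an $e$ with $He = s$ and $\wt(e) \le w$?

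The natural first attempt is to fold the syndrome into the code. Given a coset-weight instance $(H, s, w)$, append $s$ as an extra column to obtain $H' = [\,H \mid s\,]$, so that the codewords of $\C(H')$ are exactly the strings $(e \mid b)$ with $He \oplus bs = 0$. A codeword with $b = 1$ is precisely a solution of $He = s$ and has weight $\wt(e) + 1$, whereas a codeword with $b = 0$ is an ordinary codeword of the \emph{base} code $\C(H)$. One would like to conclude that $\dist(\C(H')) \le w+1$ if and only if the decoding instance is a yes-instance, setting $t = w+1$. The obstruction --- and the reason the statement is not a one-line corollary of \cite{BMT78} --- is \emph{interference}: a low-weight nonzero codeword of $\C(H)$ (taking $b = 0$) produces a low-weight codeword of $\C(H')$ that has nothing to do with whether $He = s$ has a light solution, so in general $\dist(\C(H'))$ cannot be used to read off the answer.

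The crux, and the step I expect to be the genuine obstacle, is to neutralize this interference. Here I would follow the strategy of \cite{Vardy}: rather than reduce from an arbitrary coset-weight instance, engineer the construction so that the base code $\C(H)$ is forced to have minimum distance strictly greater than the target $t = w+1$, while still preserving the property that a genuine solution $e$ of $He = s$ with $\wt(e) \le w$ yields the codeword $(e \mid 1)$ of weight at most $t$. Once every codeword of $\C(H')$ coming from $b = 0$ is pushed above the threshold, we would have $\dist(\C(H')) = \min\{\wt(e) : He = s\} + 1$ whenever the instance is solvable, and hence $\dist(\C(H')) \le t$ if and only if there is an $e$ with $He=s$ and $\wt(e)\le w$. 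Achieving this weight separation uniformly --- for \emph{all} instances, not merely those whose base code happens to have good distance --- is the delicate part of the argument, and is precisely what accounts for the long gap, noted in the introduction, between the hardness of decoding and the hardness of the minimum distance problem. As the resulting reduction is deterministic and runs in polynomial time, combining it with the NP-membership above gives NP-completeness of $\MD$.
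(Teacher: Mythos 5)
This statement is not proved in the paper at all: it is imported as a black-box fact from \cite{Vardy}, so the only meaningful question is whether your proposal constitutes a self-contained proof. It does not --- there is a genuine gap at exactly the point you yourself flag as ``the crux.'' Your NP-membership argument is fine, and your reduction skeleton (append the syndrome $s$ as an extra column, so that codewords $(e \mid 1)$ of $\C([\,H \mid s\,])$ correspond to solutions of $He = s$ with weight $\wt(e)+1$, while codewords $(e \mid 0)$ are codewords of the base code $\C(H)$) is the correct starting point, with the interference problem correctly diagnosed. But the entire content of Vardy's theorem is the step you leave as an intention: transforming an \emph{arbitrary} coset-weight instance $(H, s, w)$ of \cite{BMT78} into one whose base code provably has minimum distance above the threshold $t = w+1$, while preserving the existence and weight of coset solutions. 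You give no construction achieving this and no indication of how it would go; you only assert that one should ``follow the strategy of \cite{Vardy}.'' In Vardy's paper that step is the bulk of the work: it requires a genuinely nontrivial gadget (his argument passes through maximum-distance-separable codes over large extension fields and a concatenation step to return to $\bbF_2$) precisely because the weight separation must be enforced uniformly over all instances, not just those whose base code happens to be good. Naming the obstacle is not the same as overcoming it; as written, your proposal is an accurate account of why the theorem was hard to prove, not a proof.

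A secondary point of attribution: the deterministic, exact-distance statement is indeed the one due to \cite{Vardy}, whereas the randomized (RUR) reductions discussed in this paper belong to the approximation results of \cite{MDS07}. Your closing claim that the reduction is deterministic is therefore consistent with the literature, but it rests on a construction that your proposal never supplies.
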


In some applications, one may not need to exactly compute the distance of a code, but an approximation of it may be enough. Ref.~\cite{MDS07} studied two kinds of approximate versions of $\MD$, multiplicative and additive. In the multiplicative version, we wish to approximate the distance of a code up to a constant factor $\gamma$; in the additive version, we wish to approximate it by an additive factor of $\tau\cdot n$. The decision versions of these approximation problems are stated as the following promise problems.\footnote{In \cite{MDS07}, the problems are stated in terms of the generator matrix of the code, instead of the parity check matrix. The generator matrix is a matrix whose row space is equal to the code subspace, and given the generator matrix one can efficiently compute the parity check matrix and vice versa. Hence the two representations are equivalent for our purposes.}

\begin{problem}[Multiplicative approximate minimum distance problem, $\GMD_\gamma$]
\leavevmode

\begin{tabular}{p{1.5cm}p{5.6cm}}
\textsc{Instance:} & A matrix $H \in \bbF^{(n-k)\times n}$, an integer $t > 0$ and an approximation factor $\gamma \geq 1$.\\
\textsc{Yes:} & $\dist(\C(H)) \leq t$.\\
\textsc{No:} & $\dist(\C(H)) > \gamma \cdot t$.\\
\end{tabular}
\end{problem}

\begin{problem}[Additive approximate minimum distance problem, $\GAMD_\tau$]
\leavevmode

\begin{tabular}{p{2cm}p{5.6cm}}
\textsc{Instance:} & A matrix $H\in \bbF^{(n-k)\times n}$, an integer $t> 0$, and an approximation factor $\tau > 0$.\\
\textsc{Yes:} & $\dist(\C(H)) \leq t$.\\
\textsc{No:} & $\dist(\C(H)) > t + \tau\cdot n$.\\
\end{tabular}
\end{problem}

It was shown in \cite{MDS07} that $\GMD$ and $\GAMD$ are NP-hard under polynomial-time \emph{reverse unfaithful randomized} (RUR) reductions. These are probabilistic reductions where no-instances always map to no-instances, and yes-instances map to yes-instances with high probability. In particular, given a parameter $s$, the \cite{MDS07} reduction maps a yes-instance correctly with probability $1-2^{-s}$ in time $\poly(s)$.

\begin{fact}[\cite{MDS07}, Theorems 22 and 32]\label{fc:c-approx-hard}
For every $\gamma \geq 1$, $\GMD_\gamma$ is NP-hard under polynomial time RUR reductions with soundness error exponentially small in a security parameter. Moreover, there exists a $\tau > 0$ such that $\GAMD_\tau$ is NP-hard under polynomial time RUR reductions with the same soundness error.
\end{fact}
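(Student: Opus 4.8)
Since this is a purely classical statement, the plan is to sketch the reduction of \cite{MDS07}, which establishes both parts at once. The reduction starts from the \emph{gap nearest codeword problem} (gap-NCP): given a parity-check matrix, a target vector $y$ and integers $w$ and $\gamma'$, distinguish the case that some vector in the coset of $y$ has weight at most $w$ (yes) from the case that every vector in the coset has weight more than $\gamma' w$ (no). By the PCP-based hardness result of Arora, Babai, Stern and Sweedyk, gap-NCP is NP-hard for \emph{every} constant $\gamma'$, and this hardness holds under ordinary deterministic (Karp) reductions; consequently all randomness in the final reduction will be confined to a single combinatorial gadget, which is precisely what yields the reverse-unfaithful structure.

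The central difficulty is \emph{homogenization}: gap-NCP is an affine problem (find a vector close to a fixed target $y$) while $\MD$ is homogeneous (find the lightest \emph{nonzero} codeword). The naive idea of appending $y$ to the generators fails, because the underlying code already contains its own light codewords unrelated to $y$, creating spurious short vectors on the no-side. The tool that repairs this is a \emph{locally dense code}: a linear code $L$ together with a center $s$ and a radius $\rho$ slightly larger than $\dist(L)/2$, such that the Hamming ball $B(s,\rho)$ contains exponentially many codewords of $L$ while $\dist(L)$ remains large. Here the large minimum distance is arranged to hold \emph{with certainty} (from a suitable explicit code), whereas the density of $B(s,\rho)$ and a random linear embedding of the gap-NCP instance into the gadget are the randomized ingredients. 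First I would construct such an $L$, $s$, $\rho$ and verify the density-versus-distance trade-off; this is the technical heart of the argument.

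Next I would combine the gap-NCP instance with the gadget into a single code $\C$, using the random embedding to plant the (unknown) NCP code and target near $s$, and then analyse the two directions. \textbf{Completeness}: if gap-NCP is a yes-instance, then with high probability over the embedding the true close coset representative is matched to one of the many codewords of $L$ inside $B(s,\rho)$, producing a single nonzero codeword of $\C$ of weight at most $t$, so $\dist(\C)\le t$. \textbf{Soundness}: if gap-NCP is a no-instance, then for \emph{every} setting of the random coins the weight of any nonzero codeword of $\C$ is forced above $\gamma t$ --- the large, deterministic minimum distance of $L$ blocks cheating purely on the gadget side, while the NCP gap blocks cheating on the target side --- so $\dist(\C)>\gamma t$ with certainty. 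This asymmetry (high-probability completeness, deterministic soundness) is exactly a reverse-unfaithful randomized reduction, and enlarging the gadget makes the embedding succeed except with probability $2^{-s}$ in time $\poly(s)$, giving the stated soundness error. Taking $\gamma'$ a large enough constant makes the resulting multiplicative gap $\gamma$ an arbitrary constant, which establishes NP-hardness of $\GMD_\gamma$ for every $\gamma\ge 1$.

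The additive statement then follows from the same construction at essentially no extra cost: the distances $t$ and $\gamma t$ produced above are both $\Theta(n)$ in the length $n$ of $\C$, so a multiplicative constant-factor separation is simultaneously an additive separation of at least $\tau\cdot n$ for a fixed constant $\tau>0$, yielding NP-hardness of $\GAMD_\tau$ under the same RUR reduction and the same soundness error. I expect the main obstacle to be the gadget itself: exhibiting a locally dense code whose ball $B(s,\rho)$ holds $2^{\Omega(n)}$ codewords while the minimum distance stays large enough to kill every spurious short vector, and then checking that the planted code $\C$ indeed shifts its distance across the full gap in both directions. The surrounding steps --- the homogenization bookkeeping, the translation of the constant NCP gap into the constant $\MD$ gap, and the amplification of the completeness probability to $1-2^{-s}$ --- are routine once the gadget's parameters are fixed.
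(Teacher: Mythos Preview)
The paper does not prove this statement at all: it is stated as a \emph{Fact} and attributed to \cite{MDS07}, Theorems~22 and~32, with no accompanying argument. So there is no ``paper's own proof'' to compare against; the authors simply import the result as a black box for use in Theorem~\ref{thm:q-NPcomp}.

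That said, your sketch is a faithful high-level summary of the Dumer--Micciancio--Sudan argument: the reduction from constant-gap NCP (via Arora--Babai--Stern--Sweedyk), the homogenization obstacle, the locally dense code gadget with deterministic minimum distance and probabilistic density/embedding, the resulting reverse-unfaithful asymmetry, and the observation that the produced distances are $\Theta(n)$ so the multiplicative gap is simultaneously an additive $\tau n$ gap. If you actually intend to reproduce the proof rather than cite it, the only place you would need to add real content is the explicit construction of the locally dense code and the verification of its parameters; everything else in your outline is essentially bookkeeping, as you note.
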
 
Note that the above results imply that there is no RP algorithm $\GMD_\gamma$ or $\GAMD_\tau$, with any $\gamma$ and the $\tau$ given by the above fact, unless NP = RP. In fact the result for $\GMD$ can be amplified to get that for every $\eps > 0$, there is no RQP algorithm for $\GMD_\gamma$ with $\gamma = 2^{\log^{(1-\eps)}n}$, unless NP is contained in RQP (class of problems having randomized quasipolynomial time algorithms with one-sided error).

We define the quantum versions of the above classical problems analogously. This will be done in the framework of CWS codes described in \ref{sec:Qcodes-prelim}. We note that any graph $G$ with $n$ vertices and $[n,k]$ classical code describes a valid CWS code. Therefore, any parity-check matrix $H \in \bbF_2^{(n-k)\times n}$ and a graph $G$ with $n$ vertices (along with parameter $t$) forms a valid instance of the exact minimum distance problem of a CWS code. The two approximate versions are of course promise problems just like in the classical case.
\begin{problem}[Quantum minimum distance problem, $\QMD$]\label{prob:qmindist}
\leavevmode

\begin{tabular}{p{1.5cm}p{5.6cm}}
\textsc{Instance:} & A matrix $H \in \bbF_2^{(n-k)\times n}$, a graph $G$ with $n$ vertices, and an integer $t > 0$ \\
\textsc{Yes:} & $\qdist(\C(H),G) \leq t$ i.e., is there a Pauli error $E\neq \Id$ of weight $\leq t$ that does not satisfy the conditions of Fact \ref{fc:CWS-detect}.\\
\textsc{No:} & $\qdist(\C(H),G) > t$\\
\end{tabular}
\end{problem}

\begin{problem}[Multiplicative approximate quantum minimum distance problem, $\GQMD_\gamma$]\label{problem:gqmindist_multiplicative}
\leavevmode

\begin{tabular}{p{1.5cm}p{5.6cm}}
\textsc{Instance:} & A matrix $H \in \bbF_2^{(n-k)\times n}$, a graph $G$ with $n$ vertices, an integer $t > 0$, and an approximation factor $\gamma \geq 1$.\\
\textsc{Yes:} & $\qdist(\C(H),G) \leq t$.\\
\textsc{No:} & $\qdist(\C(H),G) > \gamma \cdot t$.
\end{tabular}
\end{problem}

\begin{problem}[Additive approximate quantum minimum distance problem, $\GAQMD_\tau$]\label{problem:gqmindist_additive}
\leavevmode

\begin{tabular}{p{1.5cm}p{5.6cm}}
\textsc{Instance:} & A matrix $H \in \bbF_2^{(n-k)\times n}$, a graph $G$ with $n$ vertices, an integer $t > 0$, and an approximation factor $\tau > 0$. \\
\textsc{Yes:} & $\qdist(\C(H), G) \leq t$.\\
\textsc{No:} & $\qdist(\C(H), G) > t + \tau\cdot\sqrt{n}$.
\end{tabular}
\end{problem}

Like $\MD$, it is easy to see that $\QMD$ is in NP. The two approximation problems, being promise problems, are not. The additive approximation problem $\GAQMD_\tau$ is defined with the additive factor being $\tau\cdot\sqrt{n}$ instead of $\tau\cdot n$ like in the classical case because that is what naturally arises in our reduction between the classical and quantum minimum distance problems. We can also define versions of these problems where the CWS code is promised to be non-degenerate, though we shall not explicitly name them.


\section{Lower bound on graph distance distance for 4-cycle free graphs}\label{sec:diag}
In this section, we prove Lemma~\ref{lemma:intro_Ddist_restruct}, a lower bound on the graph distance.
Before proving this lemma, we shall introduce some terms. Note that for a graph $G$ with {adjacency matrix $A_G$ having} columns $\{u^i\}_i$,
\[ b\oplus \left(\bigoplus_{i=1}^na_iu^i\right) = \left(\bigoplus_{j=1}^nb_je^j\right)\oplus\left(\bigoplus_{i=1}^na_iu^i\right) \]
where $e^j$ denotes the vector with 1 only in the $j$-th location. The above expression is the sum of a subset of columns of the matrix $(\Id|A_G)$. If $G$ has minimum degree $\delta \geq 2$, then each column of $(\Id|A_G)$ has Hamming weight either 1 or at least $\delta$ (with there existing a column which has weight exactly $\delta$). We call such a set of vectors a set with a $\delta$ \emph{degree gap}. If $\Cl_G(E)=0^n$ for some $E$, then that means a subset of columns of $(\Id|A_G)$ of size some $d$ (which corresponds to the weight of $E$ being between $d$ and $d/2$) sums to the zero vector. This subset is obviously one with a degree gap $\delta$.

It is not difficult to see that a graph $G$ has no 4-cycles if and only if each pair of vertices has at most one common neighbour. For a vector $v$, let $\supp(v)$ denote the set of components where $v$ has a non-zero entry. The condition that each pair of vertices has at most one neighbour can be stated in terms of the columns of $A_G$ as
\begin{equation}\label{eq:supp-cond}
|\supp(u^i) \cap \supp(u^j)| \leq 1
\end{equation}
for $j\neq i$. Equation \eqref{eq:supp-cond} is also true for the columns of $(\Id|A_G)$, since all the columns of $\Id$ have support size 1 anyway. We call a set of vectors where all pairs satisfy \eqref{eq:supp-cond} an at-most-one-matching (ATOM) set.

A subset of columns of $(\Id|A_G)$ that sums to zero is an ATOM set with degree gap $\delta$. We prove the following lemma for such a set of vectors, which further {implies} Lemma \ref{lemma:intro_Ddist_restruct}.

\begin{lemma}\label{lem:conds}
Let $S$ be a non-empty set of binary vectors which is ATOM, has degree gap $\delta$ and its elements sum to zero. Let $S_1$ denote the set of vectors of weight 1 in $S$, and let $S_\delta = S\setminus S_1$ be the set of vectors that have weight at least $\delta$. Then we must have, $\max\{|S_1|,|S_\delta|\} \geq \delta$.
\end{lemma}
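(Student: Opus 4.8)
The plan is to set $p=|S_1|$ and $q=|S_\delta|$ and argue by cases on the size of $q$. If $q\ge\delta$ we are immediately done, so the real content is to show that $1\le q\le\delta-1$ forces $p\ge\delta$. First I would record that $q\ge 1$: since $S$ is a set, the weight-$1$ vectors in it are distinct standard basis vectors, which are linearly independent over $\bbF_2$ and so cannot sum to zero unless there are none; as $S$ is non-empty and its elements sum to zero, it must contain a vector of weight at least $\delta$.

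The key device is to track, for each coordinate $i$, the multiplicity $m(i)=|\{w\in S_\delta : i\in\supp(w)\}|$ with which $i$ is covered by the vectors of $S_\delta$. Two global counts then control everything. The degree gap gives $\sum_i m(i)=\sum_{w\in S_\delta}\wt(w)\ge q\delta$, while the ATOM property gives $\sum_i\binom{m(i)}{2}\le\binom{q}{2}$, because each of the $\binom{q}{2}$ pairs of vectors in $S_\delta$ shares at most one coordinate. On the parity side, since the whole of $S$ sums to the zero vector over $\bbF_2$ and at most one weight-$1$ vector $e^i$ can lie in $S_1$, coordinate $i$ must be ``repaired'' by $e^i$ exactly when $m(i)$ is odd; hence $p$ equals the number of coordinates with $m(i)$ odd, and in particular $p\ge a$, where $a:=|\{i:m(i)=1\}|$.

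It remains to bound $a$ from below, and this is the crux. Writing $\sum_i m(i)=a+\sum_{m(i)\ge2}m(i)$, I would bound the ``multiply covered'' mass by converting the pairwise bound into a mass bound through the elementary inequality $m\le\binom{m}{2}+1$ (valid for $m\ge2$): this yields $\sum_{m(i)\ge2}m(i)\le\sum_{m(i)\ge2}\binom{m(i)}{2}+|\{i:m(i)\ge2\}|\le\binom{q}{2}+\binom{q}{2}=q(q-1)$, where the count of multiply covered coordinates is bounded by $\binom{q}{2}$ using $\binom{m(i)}{2}\ge1$. Combining with $\sum_i m(i)\ge q\delta$ gives $a\ge q\delta-q(q-1)=q(\delta-q+1)$. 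Finally, the quadratic $q\mapsto q(\delta+1-q)$ is concave and on the integer range $1\le q\le\delta-1$ attains its minimum at an endpoint, where its values are $\delta$ and $2\delta-2$; hence $p\ge a\ge\delta$ for $\delta\ge2$ (the case $\delta=1$ being trivial), completing the case analysis.

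I expect the main obstacle to be the third paragraph: turning the local ATOM condition (pairs share at most one coordinate) into a usable upper bound on the total mass sitting on multiply covered coordinates. The inequality $m\le\binom{m}{2}+1$ is exactly what links $\sum_i\binom{m(i)}{2}$ to $\sum_{m(i)\ge2}m(i)$, and recognizing that $a\ge q(\delta-q+1)$ is the clean intermediate bound --- together with the observation that this parabola is minimized at the endpoints of $[1,\delta-1]$ --- is what makes the argument close.
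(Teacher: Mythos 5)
Your proof is correct, and it takes a genuinely different route from the paper's. The paper proves the lemma by an exhaustive case analysis on $|S|$: sets of size at most $\delta+1$ are characterized explicitly (Claim~\ref{clm:1}), sets of size at least $2\delta$ are trivial (Claim~\ref{clm:2}), and the intermediate range $\delta+2\le|S|\le 2\delta-1$ is handled by a further split (Claims~\ref{clm:3} and \ref{clm:4}), with Claim~\ref{clm:4} requiring an induction on the difference between $|S|$ and the degree gap, driven by the replacement of a pair $\{v^i,e^i\}$ by the single vector $v^i\oplus e^i$, which shrinks the set while preserving the ATOM and sum-to-zero properties. You instead argue by direct double counting on the coordinate multiplicities $m(i)$: the mass bound $\sum_i m(i)\ge q\delta$, the ATOM pair count $\sum_i\binom{m(i)}{2}\le\binom{q}{2}$, and the parity observation that summing to zero forces $S_1=\{e^i: m(i)\text{ odd}\}$ (so $|S_1|\ge|\{i:m(i)=1\}|$), which combine via the elementary inequality $m\le\binom{m}{2}+1$ for $m\ge 2$ into $|S_1|\ge q\delta-q(q-1)=q(\delta+1-q)\ge\delta$ whenever $1\le q\le\delta-1$. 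I checked the steps: the pair count is a faithful translation of ATOM, the parity claim uses only that distinct weight-$1$ vectors occupy distinct coordinates, $q\ge 1$ follows from non-emptiness, and the concave function $q\mapsto q(\delta+1-q)$ indeed attains its minimum over $[1,\delta-1]$ at an endpoint, where its values are $\delta$ and $2\delta-2\ge\delta$ for $\delta\ge 2$ (the case $\delta=1$ being vacuous). Your argument is shorter, avoids induction and case explosion entirely, and in the intermediate regime proves a stronger quantitative tradeoff ($|S_1|\ge q(\delta+1-q)$, which at $q=1$ also recovers the paper's bound $|S|\ge\delta+1$); the fact that $S_1$ is completely determined by $S_\delta$ through parities is a clean structural observation not made explicit in the paper. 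What the paper's approach buys in exchange is the explicit characterization of the extremal sets of size $\delta+1$ in Claim~\ref{clm:1}, which is the kind of information one needs to decide exactly when $\Ddist(G)$ equals $\delta$ rather than $\delta+1$; your counting argument yields the lower bound but not that classification.
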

\begin{proof}
The proof will be done by case analysis. 
{Our strategy here will be to exhaustively go over all possible sizes of $S$.
{For each case, we present a relevant claim. The statements and proofs of these claims can be found in the appendix.}
\begin{itemize}
    \item Claim~\ref{clm:1} shows that no  $S$ of size strictly less than $\delta+1$ satisfy the premise of Lemma. Furthermore, there are only two possible subsets $S$ of size $\delta+1$ that can satisfy the conditions of Lemma~\ref{lem:conds}, which the claim explicitly characterizes. Hence, Lemma~\ref{lem:conds} is proved for $\vert S \vert \leq \delta+1$.
    \item Claim~\ref{clm:2} deals with $\vert S \vert \geq 2\delta$, for which the lemma is trivially true.
    \item After the above two claims, the only cases left are $\delta+2 \leq \vert S \vert \leq 2\delta-1$. 
    The proof for this is further divided into 2 parts.
    \begin{itemize}
        \item Claim~\ref{clm:3} deals with the cases when $S_\delta$ contains a vector $v$ whose support is disjoint with the support of $S_1$.
        \item Claim~\ref{clm:4} deals with the case where such a $v$ does not exist, that is, every $v \in S_\delta$ has non-trivial intersection with the support of $S_1$.
    \end{itemize}
\end{itemize}}
Claims \ref{clm:1} to \ref{clm:4} cover all cases of Lemma \ref{lem:conds}. This completes the proof.
\end{proof}

\begin{proof}[Proof of Lemma \ref{lemma:intro_Ddist_restruct}]
Let $S$ satisfying the conditions of Lemma \ref{lem:conds} correspond to an error $E=X(a)Z(b)$. $S_1$ containing a vector $e^i$ means that $b_i=1$; similarly, each vector in $S_\delta$ corresponds to $a_j=1$ for a unique location $j$. If $\Cl_G(E) = 0^n$, then by Lemma \ref{lem:conds}, there are at least $\delta$ locations where $a_i=1$, or $\delta$ locations where $b_i=1$. Since the weight of $E$ is the number of locations where $a_i\lor b_i=1$, this means that the weight of $E$ is at least $\delta$. Thus the minimum weight of $E$ for which $\Cl_G(E)=0^n$, i.e., $\Ddist(G)$, is at least $\delta$.
\end{proof}

\section{NP-hardness of quantum minimum distance\label{sec:hardness}}

In this section, we prove our central result on the hardness of the minimum distance of quantum codes.
Our proof strategy will be as follows: we shall reduce the minimum distance problem for classical codes to the minimum distance problem in CWS codes. First, we shall go from the given $[n,k]$ classical code $\C$ to an $[m, k]$ code $\C'$ such that $m = O(n^2)$ and $\dist(\C')=\dist(\C)$. This is just to ensure that $\dist(\C') = O(m^{1/2})$. Moreover, we shall also pick $m$ such that $m=p^2+p+1$ for some prime $p$. Such an $m$ can be efficiently found due to Lemma~\ref{lem:mp-algo}.
{Once we have $\C'$, we shall use it to construct a CWS code $\Q = (\C',G)$ such that $\dist(\C)= \dist(\C')= \dist(\Q)$. The graph $G$ we use in $\Q$ will have no 4-cycles, so that we can apply Lemma \ref{lemma:intro_Ddist_restruct}. But we shall also need the graph to have a high enough degree for each vertex. We shall use Lemma~\ref{lem:4cycle-free} to construct such a graph. The construction is the so-called orthogonal polarity graph, defined by Erd\"{o}s, R\'{e}nyi and S\'{o}s \cite{ERS66}. This completes our reduction from the minimum distance of classical codes to  the minimum distance of CWS codes.}

\begin{restatable}{lemma}{mpalgo}
\label{lem:mp-algo}
For $n > 7$, there exists a number of the form $p^2+p+1$ in the interval $[n, 7n]$ for some prime $p$. Furthermore, given $n$, there exists an algorithm that finds this number in time $\poly(n)$. 
\end{restatable}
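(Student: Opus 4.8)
The plan is to reduce the statement to a routine application of Bertrand's postulate. Setting $g(p) = p^2 + p + 1$, which is strictly increasing for $p \geq 0$, the requirement $g(p) \in [n, 7n]$ is equivalent to $p \in [a, b]$, where $a = \tfrac{1}{2}\left(-1 + \sqrt{4n-3}\right)$ and $b = \tfrac{1}{2}\left(-1 + \sqrt{28n-3}\right)$ are the positive roots of $g(p) = n$ and $g(p) = 7n$ respectively. So it suffices to exhibit a prime in the interval $[a,b]$. Since $n > 7$ we have $a > 2$, so $m := \lceil a \rceil \geq 3$, and Bertrand's postulate supplies a prime $p$ with $m < p \leq 2m$. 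Then $p > m \geq a$ already gives the lower bound $g(p) > g(a) = n$.

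For the upper bound I need $2m \leq b$. Using $m = \lceil a \rceil < a + 1$ it is enough to verify $2a + 2 \leq b$, i.e. $3 + 2\sqrt{4n-3} \leq \sqrt{28n-3}$. Squaring (both sides are positive) and cancelling reduces this to $\sqrt{4n-3} \leq n$, i.e. $(n-1)(n-3) \geq 0$, which holds for all $n \geq 3$ and in particular for $n > 7$. Hence $p \leq 2m < 2a + 2 \leq b$, so $g(p) < g(b) = 7n$, and therefore $p^2 + p + 1 \in [n, 7n]$ as required. The only real content here is confirming that the interval $[a,b]$ has ratio $b/a \to \sqrt{7} > 2$, so that a Bertrand-type gap is guaranteed to fit strictly inside it; the algebra above makes this precise and is the one step that has to be checked carefully.

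For the algorithmic claim, note that any admissible $p$ satisfies $a \leq p \leq b = O(\sqrt{n})$. The algorithm simply scans $p = \lceil a \rceil, \lceil a \rceil + 1, \ldots, \lfloor b \rfloor$, and for each candidate tests whether $p$ is prime (by trial division, which costs $\poly(p) = \poly(\sqrt{n}) = \poly(n)$) and whether $p^2 + p + 1$ lies in $[n, 7n]$, returning the first value that succeeds. The existence argument above guarantees the scan finds such a $p$, and since there are $O(\sqrt{n})$ candidates, each processed in $\poly(n)$ time, the total running time is $\poly(n)$.
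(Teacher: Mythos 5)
Your proof is correct and follows essentially the same route as the paper's: both invoke Bertrand's postulate to place a prime $p$ of order $\sqrt{n}$ so that $p^2+p+1$ lands in $[n,7n]$ (the constant $7$ being just large enough to absorb the factor-of-two gap Bertrand guarantees), and both establish the algorithmic claim by a brute-force scan over $O(\sqrt{n})$ candidates with trial-division primality testing. The only difference is bookkeeping: the paper anchors Bertrand's postulate at the largest prime $q$ with $q^2+q+1<n$ and takes the next prime, whereas you work with the explicit interval endpoints $a,b$ solving $g(p)=n$ and $g(p)=7n$ and apply Bertrand at $\lceil a \rceil$.
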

\begin{proof}
Let $q$ be the largest prime such that $q^2+q+1 < n$; such a prime always exists for $n> 7$. Let $p$ be the next prime after $q$. By Bertrand's postulate, $p$ must lie in the interval $[q+1, 2(q+1)]$. Now consider the number $p^2+p+1$. It must satisfy $p^2+p+1\leq 4(q+1)^2 + 2(q+1) + 1 < 7(q^2 + q + 1) \leq 7n$. Moreover, by definition $p^2+p+1 \geq n$. Therefore, $p^2+p+1$ is a number of the required form in the internal $[n,7n]$.

In order to find $p$, the algorithm first finds the largest prime $q$ such that $q^2 + q + 1 < n$. This can be done by checking all the numbers up to $\sqrt{n}$, which takes $\poly(n)$ time. Then the algorithm finds the next prime $p$ and outputs it. This can be done by checking all the numbers up to $2(q+1)$, which can also be done in $\poly(n)$ time.
\end{proof}

\begin{restatable}{lemma}{cyclefree}
\label{lem:4cycle-free}
For $m=p^2+p+1$ for some prime $p$, there is an algorithm that, given $m$, constructs a graph $G$ with $m$ vertices satisfying the following properties:
\begin{enumerate}[(i)]
\item Each vertex of $G$ has degree $p$ or $p+1$;
\item $G$ is free of 4-cycles.
\end{enumerate}
Furthermore, the running time of the algorithm is $\poly(m)$.
\end{restatable}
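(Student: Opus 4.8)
The plan is to realize $G$ as the \emph{orthogonal polarity graph} of the projective plane over $\bbF_p$, as suggested by the reference to \cite{ERS66}. Concretely, I would take the vertex set to be the points of the projective plane $\mathrm{PG}(2,p)$, i.e. the equivalence classes $[x] = [x_0 : x_1 : x_2]$ of nonzero vectors $x \in \bbF_p^3$ under scalar multiplication. A standard count gives $(p^3-1)/(p-1) = p^2 + p + 1 = m$ such points, so the vertex count is correct. For the edges, I would declare two distinct points $[x]$ and $[y]$ adjacent exactly when $\inner{x}{y} = x_0 y_0 + x_1 y_1 + x_2 y_2 = 0$; this is well defined on equivalence classes, since rescaling $x$ or $y$ only rescales the form and hence preserves the condition of being zero. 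Note that $\bbF_p$ is a field because $p$ is prime, so all of this arithmetic makes sense.

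For property (i), I would use the fact that the standard bilinear form is nondegenerate (its Gram matrix is the identity, with determinant $1 \neq 0$), so the \emph{polar} set $\{[y] : \inner{x}{y}=0\}$ of any point $[x]$ is a projective line, which contains exactly $p+1$ points. The neighbours of $[x]$ are precisely the points on this polar line other than $[x]$ itself. Hence $[x]$ has degree $p+1$ when $\inner{x}{x}\neq 0$, and degree $p$ when $\inner{x}{x}=0$ (so that $[x]$ lies on its own polar line and one of the $p+1$ points is excluded). Either way the degree is $p$ or $p+1$, as required. This argument uses only nondegeneracy of the form and avoids having to count self-orthogonal points exactly, which is convenient since that count is sensitive to the characteristic.

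For property (ii), I would argue that any two distinct vertices have at most one common neighbour, which rules out $4$-cycles (a $4$-cycle on $a,b,c,d$ would force $b$ and $d$ to be two distinct common neighbours of $a$ and $c$). A common neighbour $[z]$ of distinct points $[x],[y]$ must satisfy $\inner{x}{z}=\inner{y}{z}=0$, i.e. $z$ lies in the orthogonal complement of $\mathrm{span}(x,y)$. Since $[x]\neq[y]$, the vectors $x,y$ are linearly independent, so this complement is one-dimensional and therefore determines a single projective point; thus there is at most one common neighbour, and $G$ is free of $4$-cycles.

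Finally, for the algorithmic claim I would enumerate the $m$ points by listing normalized representatives (scaling so that the leading nonzero coordinate is $1$) and then build the adjacency relation by evaluating $\inner{x}{y}$ over all $O(m^2)$ pairs. Each arithmetic operation lives in $\bbF_p$ with $p = O(\sqrt m)$ and so costs $\poly(\log m)$, giving total running time $\poly(m)$. The step I expect to require the most care is the degree analysis in (i): one must verify that the polar of \emph{every} point is a genuine line with exactly $p+1$ points, and correctly treat the self-orthogonal case where a point lies on its own polar. By contrast, the $4$-cycle-free property falls out cleanly from the dimension count and is insensitive to the characteristic of $\bbF_p$.
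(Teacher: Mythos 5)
Your proposal is correct and follows essentially the same route as the paper: both build the Erd\H{o}s--R\'enyi--S\'os orthogonal polarity graph on the points of $\mathrm{PG}(2,p)$, prove $4$-cycle freeness by noting that the common orthogonal complement of two independent vectors is one-dimensional, and get degrees $p$ or $p+1$ from the self-orthogonality dichotomy, with the same $O(m^2)$ pairwise-check algorithm. The only cosmetic difference is that you derive the $p+1$ count from nondegeneracy of the form (the polar of a point is a projective line), whereas the paper counts solutions of the orthogonality equation explicitly and divides out the scalar factor $p-1$; these are the same linear-algebra fact.
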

\begin{proof}
We use the construction due to Erd\"{o}s, R\'{e}nyi and S\'{o}s~\cite{ERS66}. Consider the set of elements $(x,y,z)$ in $\bbF_p^3 {\setminus \lbrace 0^3 \rbrace}$. We define an equivalence relation on these elements as follows: $(x,y,z)$ and $(\lambda x, \lambda y, \lambda z)$ for all $\lambda \neq 0$ are in the same equivalence class. The number of such equivalence classes is $p^2+p+1$, since we can represent each equivalence class by elements of the form $(1, y, z)$, $(0, 1, z')$ or $(0,0,1)$ for arbitrary $y, z, z' \in \bbF_p^3$. The number of elements of the first type is $p^2$, the number of elements of the second type is $p$, and there is only one element of the third type. The vertices of the graph $G$ will correspond to these $p^2+p+1$ equivalence classes.

Two vertices $(x, y, z)$ and $(x', y', z')$ are connected by an edge in $G$ iff they satisfy the condition
\begin{equation}\label{eq:colinear}
xx' + yy' + zz' = 0.
\end{equation}
(Note that if the pair of representative points of two equivalence classes satisfy this condition, then so does any other pair of points from the two classes.) Some vertices will satisfy $x^2+y^2+z^2=0$; we do not add self-loops for these vertices. We claim that the degree of vertices which do not satisfy $x^2+y^2+z^2=0$ is $p+1$, which would make the degree of the rest of the vertices $p$. To see this, we calculate the number of possible $(x',y',z')$ in $\bbF_p^3\setminus\{0^3\}$ which satisfy \eqref{eq:colinear} for a fixed $(x,y,z)$ first. Without loss of generality, assume $x\neq 0$; the argument is similar for the case when $x=0$ but $y \neq 0$ or $z \neq 0$. All points of the form $\left(-x^{-1}\cdot(yy'+zz'), y', z'\right)$ where either $y'\neq 0$ or $z'\neq 0$ are solutions to \eqref{eq:colinear}. This gives $p^2-1$ possible solutions in $\bbF_p^3\setminus\{0^3\}$. However, since $(\lambda x', \lambda y', \lambda z')$ for $\lambda\neq 0$ satisfies \eqref{eq:colinear} whenever $(x',y',z')$ does, we have overcounted by a factor of $p-1$. Therefore, the number of equivalence classes of solutions (including possibly the equivalence class of $(x,y,z)$ itself) is $p+1$, which makes the degree of vertices not satisfying $x^2+y^2+z^2=0$, $p+1$.

Finally, to prove that $G$ has no 4-cycles, we use the fact that a necessary and sufficient condition for a graph to be 4-cycle free is for no two vertices to have more than one common neighbour. For a vertex $(u, v, w)$ to be a common neighbour of two different vertices $(x, y, z)$ and $(x', y', z')$, it must satisfy
\begin{equation}\label{eq:2-neighbour}
\begin{pmatrix} x & y & z \\ x' & y' & z'\end{pmatrix}\begin{pmatrix} u \\ v \\ w\end{pmatrix} = \begin{pmatrix} 0 \\ 0 \end{pmatrix}.
\end{equation}
Note that the matrix on the right-hand side of \eqref{eq:2-neighbour} is rank 2, since $(x,y,z)$ and $(x',y',z')$ belong to different equivalence classes. Therefore, it has a null space of dimension 1, and up to the equivalence class, \eqref{eq:2-neighbour} can only have a single solution. Therefore, two vertices have exactly one common neighbour.

We have shown that the graph $G$ satisfies the required properties, and now we only have to show that the algorithm constructing it runs in time $\poly(m)$. This is easy to see: given $m=p^2+p+1$, the algorithm lists the representative points of all the equivalence classes, and for each pair of points, checks whether \eqref{eq:colinear} is satisfied or not, and adds an edge accordingly. This can be done in $\binom{m}{2}$ time.
\end{proof}

For $\Q = (\G,\C')$ constructed using Lemmas \ref{lem:mp-algo} and \ref{lem:4cycle-free}, we shall show that $\qdist(\Q) = \dist(\C')=\dist(\C)$, and hence if we can solve the minimum distance problem for $\Q$, we can also solve the minimum distance problem for $\C$, either exactly or approximately.

\begin{theorem}\label{thm:q-NPcomp}
$\QMD$ is NP-complete. For every $\gamma \geq 1$, $\GQMD_\gamma$ is NP-hard under polynomial time RUR reductions with soundness error exponentially small in a security parameter. There exists $\tau > 0$ such that $\GAQMD_\tau$ is NP-hard under polynomial time RUR reductions with the same soundness error. Versions of $\QMD, \GQMD_\gamma$ and $\GAQMD_\tau$ where the code is promised to be non-degenerate are also NP-hard.
\end{theorem}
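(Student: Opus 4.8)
The plan is to reduce the classical minimum distance problem to $\QMD$, so that hardness transfers through Facts~\ref{fc:c-NPcomp} and~\ref{fc:c-approx-hard}. Membership of $\QMD$ in NP is immediate: a witness is a Pauli error $E \neq \Id$ of weight at most $t$, and Fact~\ref{fc:CWS-detect} certifies in polynomial time whether $E$ is undetectable. For the reduction, given an $[n,k]$ code $\C$ I would first pass to an $[m,k]$ code $\C'$ with $\dist(\C') = \dist(\C)$ and $m = \Theta(n^2)$, chosen so that the minimum degree $\delta$ of the graph built below satisfies $\delta \geq 2n \geq 2\dist(\C')$; concretely this is just a distance-preserving padding of $\C$ to a longer block length. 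Applying Lemma~\ref{lem:mp-algo} I take $m = p^2+p+1$ for a prime $p$, and then Lemma~\ref{lem:4cycle-free} produces, in $\poly(m)$ time, a $4$-cycle-free graph $G$ on $m$ vertices with $\delta = p = \Theta(\sqrt m)$. The output instance is $\Q = (G,\C')$, and the whole construction is clearly polynomial time.

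The core of the proof is to show $\qdist(\Q) = \dist(\C')$. For the upper bound, if $c$ is a minimum-weight codeword of $\C'$ then $\Cl_G(Z(c)) = c$ is a nonzero codeword, so by Fact~\ref{fc:CWS-detect} the pure-$Z$ error $Z(c)$ is undetectable, giving $\qdist(\Q) \leq \wt(c) = \dist(\C')$. For the lower bound, let $E = X(a)Z(b) \neq \Id$ be any undetectable error and set $P = \supp(a)$. If $\Cl_G(E) = 0^n$, then $\wt(E) \geq \Ddist(G) \geq \delta \geq \dist(\C')$ by Lemma~\ref{lemma:intro_Ddist_restruct} (which applies since $G$ is $4$-cycle-free). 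If $\Cl_G(E) \neq 0^n$, then Fact~\ref{fc:CWS-detect} forces $\Cl_G(E)$ to be a nonzero codeword, of some weight $w$ with $\dist(\C') \leq w \leq n$ (its support lies within the $n$ coordinates carrying $\C$).

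The one substantive step is bounding $\wt(E)$ from below in this last case. The worry is that an $X$ error propagates along the edges of $G$, so a light error could in principle synthesize a heavy codeword as $\Cl_G(E)$. This is exactly where the absence of $4$-cycles is used: since any two columns $u^i$ of $A_G$ share at most one support coordinate (the ATOM property), an inclusion--exclusion estimate gives $\wt\bigl(\bigoplus_{i\in P}u^i\bigr) \geq |P|(\delta - |P| + 1) \geq \delta$ whenever $1 \leq |P| \leq \delta$. Writing $b = \Cl_G(E) \oplus \bigoplus_{i\in P}u^i$, this yields $\wt(E) \geq \wt(b) \geq \delta - w \geq \delta - n \geq \dist(\C')$; the cases $|P| = 0$ (where $\wt(E) = w \geq \dist(\C')$) and $|P| > \delta$ (where $\wt(E) \geq |P| > \delta$) are handled directly. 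Thus every undetectable $E$ has weight at least $\dist(\C')$, and combined with the upper bound we get $\qdist(\Q) = \dist(\C') = \dist(\C)$. I expect this counting argument, together with pinning down the exact degree condition on $\delta$ relative to $n$, to be the only delicate part.

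Given this exact equality, the conclusions follow. The reduction is deterministic and maps yes-instances to yes-instances and no-instances to no-instances exactly, so $\QMD$ is NP-complete by Fact~\ref{fc:c-NPcomp}. Composing with the classical RUR reductions of Fact~\ref{fc:c-approx-hard} preserves both the RUR structure and the soundness error: for every $\gamma \geq 1$, $\GQMD_\gamma$ is NP-hard, and since the classical additive gap $\tau \cdot n$ becomes $\tau \cdot n = \Theta(\tau\sqrt m)$ after the quadratic blow-up, $\GAQMD_{\tau'}$ is NP-hard for some $\tau' > 0$. Finally, the constructed codes are automatically non-degenerate: because $\Ddist(G) \geq \delta \geq \dist(\C') = \qdist(\Q)$, every $E \neq \Id$ of weight at most $\qdist(\Q) - 1$ has $\Cl_G(E) \neq 0$ and hence $f(E) = 0$ by Claim~\ref{claim:degen_in_CWS_via_Cl_G}; so all three hardness results persist under the non-degeneracy promise.
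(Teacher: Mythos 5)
Your proposal is correct and takes essentially the same route as the paper's proof: the same distance-preserving padding of $\C$ to block length $m=p^2+p+1$ via Lemma~\ref{lem:mp-algo}, the same Erd\"{o}s--R\'{e}nyi--S\'{o}s 4-cycle-free graph from Lemma~\ref{lem:4cycle-free}, the same use of Lemma~\ref{lemma:intro_Ddist_restruct} to exclude $\Cl_G(E)=0^m$, and the same ``light codewords versus heavy columns'' dichotomy (codewords of $\C'$ have weight at most $n$ while columns of $A_G$ have weight at least $p=\Theta(\sqrt{m})$, with the ATOM property preventing cancellation) to pin down $\qdist(\Q)=\dist(\C)$, after which NP-completeness, the RUR reductions, and non-degeneracy follow exactly as in the paper. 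The only cosmetic difference is bookkeeping in the lower bound: the paper shows every error of weight below $\dist(\C)$ has $\Cl_G(E)$ too heavy (or too light) to be a codeword by counting cancellations against a single column $u^i$, whereas you argue the contrapositive with a global private-coordinate estimate $\wt\bigl(\bigoplus_{i\in P}u^i\bigr)\geq |P|(\delta-|P|+1)$; both hinge on the identical 4-cycle-free ingredient.
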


\begin{proof}
Given an instance of $\MD$ or $\GMD_\gamma$ or $\GAMD_\tau$ consisting of $H\in \bbF_2^{(n-k)\times n}$ (with $n > 7$), first by applying Lemma \ref{lem:mp-algo} we can find $m \in [25n^2, 175n^2]$ which is of the form $p^2+p+1$. Consider the parity check matrix {$H' = H \oplus \mathbb{I}_{m-n}$}. Each codeword in $\C(H')$ is of length $m$, and is a codeword of $\C(H)$ appended with $m-n$ 0s. It is clear that $\dist(\C(H)) = \dist(\C(H')) \leq \frac{m^{1/2}}{5}$. Moreover, all codewords of $\C(H')$ have Hamming weight at most $\frac{m^{1/2}}{5}$.

Next, we construct a graph $G$ with $m$ vertices using Lemma \ref{lem:4cycle-free}. Consider the CWS code $\Q = (\C(H'), G)$. We shall first prove that $\qdist(\Q) \leq \dist(\C(H))$. Let $c$ be a non-zero codeword of $\C(H')$ with $\wt_\H(c) = \dist(\C(H))$. Now consider the error $E=Z(c)$, which has weight $\dist(\C(H))$. Clearly $\Cl_G(E) = c$, which $\C(H')$ cannot detect.

To prove $\qdist(\Q) \geq \dist(\C(H))$, we need to show that all Pauli errors of weight up to $\dist(\C(H)) - 1$ are detected by $\Q$, i.e., they satisfy the conditions of Fact \ref{fc:CWS-detect}. Note that it is sufficient to show that for such an $E$, $\Cl_G(E)$ is not a codeword of $\C(H')$. Firstly, for such an error, $\Cl_G(E)$ cannot be the zero codeword. This is because by Lemma \ref{lemma:intro_Ddist_restruct}, we have that the minimum weight of $E$, for which $\Cl_G(E)=0^m$, is $p \geq \frac{m^{1/2}}{\sqrt{2}}$, whereas $\dist(\C(H)) \leq \frac{m^{1/2}}{5}$.

To show that $\Cl_G(E)$ is also not a non-zero codeword, we shall divide the errors into two cases: first when $E$ is of the form $Z(a)$, and second when it is not. For an error of weight $\dist(\C(H))-1$ of the form $Z(a)$, $\Cl_G(E)=a$, whose weight is too low to be a codeword of $\C(H')$ by definition.
In the second case, $E$ has $X$ in some set of components $S$, such that $|S| \leq \dist(\C(H))-1$.
Suppose $S$ contains a component $i$. 
Then $\Cl_G(E)$ is of the form
\[ u^i \oplus a \oplus \left(\bigoplus_{j\in S, j \neq i}u^j\right)\]
for some $a$ such that $\wt_\H(a) \leq \dist(\C(H))-1$. The $u^j$-s are columns of $G$, and so they have 1s in $\geq \frac{m^{1/2}}{\sqrt{2}}$ components each. Moreover, we have for $j\neq i$,
\[ |\supp(u^i)\cap \supp(u^j)| \leq 1, \]
which means that
\begin{equation}\label{eq:common-1s}
\sum_{j \in S, j \neq i}|\supp(u^i)\cap \supp(u^j)| \leq \dist(\C(H)) - 1.
\end{equation}
$\Cl_G(E)$ may have 0 at a component at which $u^i$ has a 1 if $a$ has a 1 in that component, or one of the other $u^j$-s has a 1 in that component. Hence by \eqref{eq:common-1s}, we have that
\begin{align*}
\wt_\H(\Cl_G(E)) & \geq \wt_\H(u^i) - \wt_\H(a) - \dist(\C(H)) + 1 \\
 & \geq \frac{m^{1/2}}{\sqrt{2}} - 2(\dist(\C(H)) - 1) \\
 & \geq \frac{4m^{1/2}}{15}.
\end{align*}
Therefore, the weight of $\Cl_G(E)$ is too high to be a codeword of $\C(H')$.

Thus we have shown that $\dist(\C(H))=\qdist(\Q)$. Given an instance $(H, t)$ of $\MD$, our polynomial time construction gives us an instance of $\QMD$ with $H', G$ as described, and the same parameter $t$. If $(H, t)$ is an instance of $\GMD_\gamma$, $(\Q=(H',G),t)$ satisfies the promise $\qdist(\Q)\leq t$ or $> \gamma \cdot t$. and is therefore an instance of $\GQMD_\gamma$. If $(H, t)$ is an instance of $\GAMD_\tau$, then $(\Q=(H',G),t)$ satisfies the promise $\qdist(\Q) \leq t$ or $> t + \tau\cdot\sqrt{m}$. By Facts \ref{fc:c-NPcomp} and \ref{fc:c-approx-hard}, this completes the proof of the first part. The hardness of the problems when restricted to non-degenerate codes follows because the quantum code $\Q$ we have constructed satisfies $\qdist(\Q) \leq \Ddist(\Q)$, and hence is non-degenerate.
\end{proof}

\subsection{CSS codes and other stabilizer code representations}
Hardness of problems related to quantum error correction has so far been studied in the \emph{standard} stabilizer formalism, where the input is given as ( $\mathbb{F}_2$ or $\mathbb{F}_4$ representation of) generators of the stabilizer group.
A stabilizer code is described by the generators of its corresponding stabilizer group.
As mentioned before, the problems that we show hardness results for are all stated in the CWS framework.
However, the CWS description of a code can be efficiently converted to its stabilizer description~\cite{KDP11}.
Therefore, our hardness results from Theorem~\ref{thm:intro_CWS_hardness} translate to the standard input of the stabilizer framework as well.

Moreover, \cite{BTL10} gives an efficient way to map an $\llbracket n,k \rrbracket$ stabilizer code with distance $d$ to a $\qbra{4n,2k}$ CSS code with distance $2d$.

\begin{fact}[\cite{BTL10}, Corollary 1]
There is a mapping that takes an $\qbra{n,k}$ stabilizer code with distance $d$ and produces an $\qbra{4n,2k}$ CSS code with distance $2d$, in time $\poly(n)$.
\end{fact}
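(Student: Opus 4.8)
Since this statement is quoted from \cite{BTL10}, the natural course is to invoke it directly; below I only sketch how one would establish such a mapping from scratch. The starting point is the symplectic ($\mathbb{F}_2$) representation of the stabilizer code: a check matrix $(H_X \mid H_Z)$ whose rows $(a_i \mid b_i)$ satisfy the abelianness condition $H_X H_Z^T + H_Z H_X^T = 0$. A CSS code, by contrast, has its $X$-type and $Z$-type checks in disjoint blocks $\tilde H_X, \tilde H_Z$ obeying the single orthogonality relation $\tilde H_X \tilde H_Z^T = 0$. The key observation is that the symplectic commutation condition is \emph{already} of CSS type after a rearrangement: taking $(H_X \mid H_Z)$ as $X$-checks and $(H_Z \mid H_X)$ as $Z$-checks on a doubled register makes the required product $(H_X \mid H_Z)(H_Z \mid H_X)^T = H_X H_Z^T + H_Z H_X^T$ vanish identically. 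This converts any stabilizer code into a CSS code by a purely linear-algebraic manipulation of its check matrix, manifestly computable in $\poly(n)$ time. Securing the exact parameters $\qbra{4n,2k}$ with distance $2d$, rather than a weaker bound, is what the specific doubling of \cite{BTL10} delivers: the extra qubits give each mixed $XZ$ generator room to split into a pure-$X$ and a pure-$Z$ generator without collisions.

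The substantive part of the argument, and the part that pins down the distance exactly, is the distance analysis. For the lower bound I would show that any nontrivial logical operator of the constructed CSS code, projected back onto the original qubits, yields a nontrivial logical operator of the original stabilizer code of at most half the weight; hence a CSS logical of weight below $2d$ would produce a stabilizer logical of weight below $d$, contradicting $\dist = d$. For the matching upper bound I would lift a minimum-weight logical operator of the original code to an explicit logical operator of the CSS code of weight exactly $2d$, so that no larger value is forced. The logical-dimension count $2k$ is then read off from the rank of the doubled check matrix.

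The main obstacle is precisely this two-sided distance bookkeeping: one must verify simultaneously that the doubling creates no spurious low-weight logical operator (the lower bound) and that weight exactly $2d$ is attainable (the upper bound), all while preserving the CSS orthogonality and the correct logical dimension. By contrast, checking efficiency of the construction and the dimension count is routine once the doubled check matrix is written down. Since all of this is carried out in \cite{BTL10}, in the paper we simply invoke it.
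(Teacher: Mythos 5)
Your bottom line---simply invoke \cite{BTL10}---coincides with what the paper does: the statement is imported as a Fact, with only the remark that the mapping goes through an intermediate object called a Majorana fermion code, which the paper declines to describe. As a citation, nothing more is required.

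However, the from-scratch sketch you offer is not the construction of \cite{BTL10}, and as described it would not yield the stated parameters. The symplectic rearrangement you propose---taking $(H_X \mid H_Z)$ as $X$-checks and $(H_Z \mid H_X)$ as $Z$-checks on a doubled register---produces a $\qbra{2n,2k}$ CSS code, not $\qbra{4n,2k}$, and its distance is only guaranteed to be at least $d$, not $2d$. Your lower-bound step fails at exactly this point: an $X$-type logical $(x_1 \mid x_2)$ of the doubled code corresponds to the Pauli logical $X(x_1)Z(x_2)$ of the original code, whose weight is $|\supp(x_1)\cup\supp(x_2)|$; this can be as large as the full CSS weight $|x_1|+|x_2|$ (take $x_2=0$), so projection does not halve the weight. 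Concretely, if the original code has a pure-$X$ (or pure-$Z$) minimum-weight logical operator, the doubled code inherits a logical operator of weight exactly $d$, and the distance does not double. The exact factor of $2$ is precisely the nontrivial content of Corollary 1 in \cite{BTL10}, and it is obtained through the Majorana intermediate: each qubit is encoded into four Majorana modes so that every nontrivial single-qubit Pauli---$X$, $Y$ and $Z$ alike---becomes an operator of weight exactly $2$, whence the weight of every logical operator doubles uniformly. That mechanism (not extra ``room to split mixed $XZ$ generators'') is what secures distance $2d$, and it is also the source of the $4n$ physical qubits.
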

This mapping goes via an intermediate object called a Majorana fermion code, which we shall not describe in this paper. But the mapping allows us to reduce the (exact or approximate) minimum distance problem in stabilizer codes to the minimum distance problem in CSS codes. Hence our hardness result (Theorem~\ref{thm:intro_CWS_hardness}) has the following corollary.
\begin{cor}
The minimum distance problem for CSS codes is NP-hard. Multiplicatively or additively approximating the distance of a CSS code is also NP-hard under polynomial-time randomized reductions.
\end{cor}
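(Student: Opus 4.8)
The plan is to obtain this corollary as a reduction from the minimum distance problem for stabilizer codes (Theorem~\ref{thm:intro_CWS_hardness}, in the standard stabilizer representation, to which the CWS description can be efficiently converted via \cite{KDP11} as noted above) to the corresponding problem for CSS codes, using the distance-doubling mapping of \cite{BTL10} stated just above as the reduction itself. Concretely, given any instance of the stabilizer minimum distance problem --- an $\qbra{n,k}$ stabilizer code together with a threshold $t$ --- I would apply the $\poly(n)$-time mapping of \cite{BTL10} to produce an $\qbra{4n,2k}$ CSS code whose distance is exactly $2d$, where $d$ is the distance of the original stabilizer code. For the exact problem, the key point is simply that $d \le t$ if and only if $2d \le 2t$, so the stabilizer instance with threshold $t$ is a yes-instance precisely when the produced CSS instance with threshold $2t$ is; since the map is polynomial-time and $\QMD$ is NP-hard, this establishes NP-hardness of the exact minimum distance problem for CSS codes.

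For the approximate versions, the only thing to verify is that the reduction preserves the gap structure, which it does because the distance scales by the exact factor $2$. In the multiplicative case, distinguishing $d \le t$ from $d > \gamma t$ becomes, under the mapping, distinguishing $2d \le 2t$ from $2d > \gamma\cdot(2t)$, so the gap is preserved with the same factor $\gamma$ and threshold $2t$. In the additive case, distinguishing $\qdist \le t$ from $\qdist > t + \tau\sqrt{n}$ becomes distinguishing $2\,\qdist \le 2t$ from $2\,\qdist > 2t + 2\tau\sqrt{n}$; since the CSS code has $N=4n$ qubits we have $2\tau\sqrt{n} = \tau\sqrt{4n} = \tau\sqrt{N}$, so the additive gap is preserved with the same constant $\tau$ measured against $\sqrt{N}$, matching the definition of $\GAQMD_\tau$. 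This clean match is precisely why the quantum additive problem was set up with a $\sqrt{n}$ normalization. Because the \cite{BTL10} mapping is deterministic and polynomial-time, composing it after the RUR reduction from the classical problem preserves the RUR property: no-instances map to no-instances and yes-instances map to yes-instances with the same (exponentially small) soundness error.

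I do not expect a genuine obstacle here, as the argument is a routine composition of a known reduction with an exact distance-doubling map. The only care required is the bookkeeping of the thresholds ($t \mapsto 2t$) and of the $\sqrt{n}$-versus-$\sqrt{N}$ normalization in the additive case, and confirming that feeding the CWS-derived hard instances through the standard stabilizer representation (so that \cite{BTL10} applies) incurs only polynomial overhead.
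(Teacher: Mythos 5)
Your proposal is correct and follows exactly the paper's route: compose the NP-hardness of the stabilizer problems (Theorem~\ref{thm:q-NPcomp}) with the deterministic $\poly(n)$-time distance-doubling map of \cite{BTL10} from $\qbra{n,k}$ stabilizer codes to $\qbra{4n,2k}$ CSS codes. Your explicit bookkeeping of the thresholds ($t \mapsto 2t$) and the $\sqrt{n}$-versus-$\sqrt{4n}$ normalization in the additive case is detail the paper leaves implicit, but the argument is the same.
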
 
Just like in the classical case, the results for the approximate minimum distance problems imply that there is no RP (class of problems having polynomial-time randomized algorithms with one-sided error) algorithm for approximating the minimum distance of a stabilizer or CSS code.

\section{Conclusion}
In this work we have shown that the minimum distance problem for quantum codes is NP-hard, and moreover, multiplicatively or additively approximating the distance of a quantum code is also NP-hard under reverse unfaithful randomized reductions. Our hardness result is shown for stabilizer codes, and moreover, we have shown that the hardness remains if we restrict the problem to CSS codes, or provide the promise that the stabilizer code is promised to be non-degenerate. Our main tool is proving this result was a lower bound on the graph distance of 4-cycle free graphs. We obtain the result by reducing the classical problems to the quantum problems in the CWS framework, using a 4-cycle free graph.

One drawback of our result for additively approximating the distance of a quantum code is the fact that we consider an additive approximation factor of $\tau\cdot\sqrt{n}$, whereas the additive factor in the equivalent classical result is $\tau\cdot n$ ($n$ being the dimension of the code). This difference is because we use 4-cycle free graphs in our reduction, which cannot have minimum degree $\omega(\sqrt{n})$, and correspondingly the quantum code we construct from a given classical code cannot have $\omega(\sqrt{n})$ distance. One way to improve this result would be to find a different set of sufficient conditions that would let us prove a result similar to Lemma \ref{lemma:intro_Ddist_restruct}. That is, is there a graph $G$ having minimum degree $\Theta(n)$, such that $\Ddist(G)$ is still lower bounded by the minimum degree? If such a graph exists, it could be used in place of the 4-cycle free graph we used in the proof of Theorem \ref{thm:q-NPcomp}, and obtain a better additive term in the result for additive approximation results. As long as we also use a classical code with linear distance, such a graph could also be used to get a simple construction of a quantum code with linear distance in the CWS framework.

\appendix
\section*{Claims in Proof of Lemma~\ref{lem:conds}}
\begin{claim}\label{clm:1}
For any $S$ satisfying the conditions of Lemma \ref{lem:conds}, must have $|S| \geq \delta + 1$. $S$ of size $\delta+1$ is of one of the following forms:
\begin{enumerate}
\item $S_\delta=\{v\}, S_1 = \cup_{j\in\supp(v)}\{e^j\}$
\item $S_\delta = \{v^1, \ldots, v^{\delta+1}\}$ for vectors $v^i$ of weight exactly $\delta$, $S_1 =\emptyset$.
\end{enumerate}
\end{claim}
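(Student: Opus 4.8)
The plan is to run a parity/double-counting argument on the support coordinates: first establish the lower bound $|S|\ge\delta+1$, and then extract rigidity from the case when this bound is tight. Throughout I use that the vectors of $S$ sum to $0$ over $\mathbb{F}_2$ (so every coordinate lies in an even number of vectors of $S$), the ATOM property, the degree gap, and $\delta\ge 2$.

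First I would observe that $S_\delta\neq\emptyset$: if every vector of $S$ had weight $1$, they would be distinct standard basis vectors $e^j$, and their $\mathbb{F}_2$-sum would be the indicator of their union, which is nonzero for nonempty $S$ --- contradicting the sum-to-zero hypothesis. So I fix some $v\in S_\delta$ with $\wt(v)\ge\delta$. Since every coordinate lies in an even number of vectors, each coordinate $p\in\supp(v)$ is contained in at least one vector of $S$ other than $v$. By the ATOM property every vector distinct from $v$ meets $\supp(v)$ in at most one coordinate, so covering the $\ge\delta$ coordinates of $\supp(v)$ requires at least $\delta$ vectors besides $v$. Hence $|S|\ge\delta+1$.

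Next, assuming $|S|=\delta+1$, the inequalities above become equalities for \emph{every} choice of $v\in S_\delta$: each heavy vector has weight exactly $\delta$, each of the remaining $\delta$ vectors meets $\supp(v)$ in exactly one coordinate, and each coordinate of $\supp(v)$ is covered exactly once by the other vectors. I then split on $h=|S_\delta|$. If $h=1$, write $S_\delta=\{v\}$; the $\delta$ remaining weight-$1$ vectors $e^j$ each satisfy $j\in\supp(v)$ and together cover $\supp(v)$ bijectively, so $S_1=\{e^j:j\in\supp(v)\}$, giving Form 1. If $h\ge 2$, I claim $S_1=\emptyset$. Suppose not and pick $e^j\in S_1$; the ``exactly-one-intersection'' fact applied to each heavy $w$ forces $j\in\supp(w)$ for all $w\in S_\delta$, so by ATOM any two heavy vectors meet only at $j$. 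Then for heavy $w$ and $p\in\supp(w)\setminus\{j\}$, the coordinate $p$ lies in no other heavy vector, so parity forces $e^p\in S_1$; these are distinct across the $h$ heavy vectors (whose supports are disjoint off $j$), giving $|S_1|\ge h(\delta-1)$. Combined with $h+|S_1|=\delta+1$ this yields $\delta+1\ge h\delta\ge 2\delta$, contradicting $\delta\ge 2$. Hence $S_1=\emptyset$, $h=\delta+1$, and all heavy vectors have weight $\delta$, which is Form 2.

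The routine part is the base inequality $|S|\ge\delta+1$; the heart of the argument --- and the step I expect to be the main obstacle --- is converting the tightness of that count into genuine rigidity. Concretely, the delicate point is the $h\ge 2$ case, where I must rule out light vectors: the fact that one light vector $e^j$ forces all heavy supports through $j$, and then parity off $j$ \emph{manufactures} a full private set of light vectors per heavy vector, producing the contradiction $\delta+1\ge 2\delta$. Getting this global consistency right (the interplay of sum-to-zero parity with the ATOM bound) is where the care is needed; the rest is bookkeeping.
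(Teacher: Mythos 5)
Your proof is correct, and while it lives in the same counting world as the paper's (the ATOM property combined with sum-to-zero parity), the mechanics differ in both halves. For the lower bound, the paper splits on $|S_\delta|$: when $|S_\delta|=1$ it notes that each of the $\geq\delta$ ones of the single heavy vector must be cancelled by a distinct weight-$1$ vector, and when $|S_\delta|\geq 2$ it counts against the pair-sum $v^1\oplus v^2$, which has weight $\geq 2\delta-2$ by ATOM and of which each remaining vector can cancel at most two coordinates. Your single uniform incidence count (one heavy vector against everything else, with parity forcing every support coordinate to be covered and ATOM capping each other vector at one coordinate) subsumes both cases and avoids the split. For the rigidity at $|S|=\delta+1$, the paper gets $S_1=\emptyset$ in one stroke from tightness of the pair-sum count --- each of the $\delta-1$ remaining vectors must cancel exactly two coordinates of $v^1\oplus v^2$, which a weight-$1$ vector cannot do --- and then observes separately that all heavy weights must be exactly $\delta$. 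You instead first extract the exactly-one-intersection rigidity, and then bootstrap: an assumed light vector $e^j$ forces every heavy support through $j$, ATOM makes the heavy supports pairwise disjoint off $j$, and parity then manufactures $h(\delta-1)$ distinct light vectors, giving $\delta+1\geq h\delta\geq 2\delta$, a contradiction. Your route is a couple of steps longer here but yields extra structural information along the way (all heavy supports would have to pass through one coordinate), whereas the paper's pair-sum trick is the shorter path to $S_1=\emptyset$. One hygiene point: your final contradiction needs $\delta\geq 2$, which Lemma \ref{lem:conds} does not state explicitly; this is harmless because the paper only defines the degree gap for graphs of minimum degree $\delta\geq 2$, but it deserves a remark.
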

\begin{proof}
All the weight 1 vectors in $S_1$ have 1s in different locations, and these 1s cannot cancel between themselves in order for the sum of vectors in $S$ to be zero. Therefore, we must have $|S_\delta|\geq 1$. If $S_\delta$ has exactly one vector, this vector has 1s in at least $\delta$ locations, all of which have to be cancelled by a unique vector in $S_1$. This gives $S_1=\cup_{j\in\supp(v)}\{e^j\}$, i.e., $|S_1| \geq \delta$. This means $|S|\geq \delta+1$. $|S|=\delta+1$ is achieved when the vector in $S_\delta$ has weight exactly $\delta$.

For the second part, we shall prove that if $|S|=\delta+1$ and $|S_\delta| \geq 2$, then $S_1 = \emptyset$. When $|S_\delta|\geq 2$, consider two vectors $v^1$ and $v^2$ in $S_\delta$. The vector $v^1\oplus v^2$ has 1s in at least $2\delta-2$ locations, since each of $v^1$ and $v^2$ has 1s in at least $\delta$ locations, of which at most one location is common. For $S$ to sum to zero, all these 1s need to be cancelled by other vectors in $S$, with each other vector being able to cancel at most two locations (one due to $v^1$ and one due to $v^2$). If each vector cancels exactly 2 locations, then an additional $\delta-1$ vectors in $S$ can take the sum of $S$ to zero. However, if $S$ has a vector of weight 1, the weight 1 vector is able to cancel a 1 in only one location, and hence the sum cannot be zero. Therefore, in this case $S$ cannot have any weight 1 vectors, i.e., $S_1=\emptyset$. Moreover, if any vector in $S_\delta$ has weight more than $\delta$, then the 1s cannot be cancelled by the other vectors in $S_\delta$. Therefore, all the vectors in $S_\delta$ must have weight exactly $\delta$.
\end{proof}
\begin{claim}\label{clm:2}
If $S$ satisfies the conditions of Lemma \ref{lem:conds} and $|S| \geq 2\delta$, then $\max\{|S_1|,|S_\delta|\}\geq \delta$.
\end{claim}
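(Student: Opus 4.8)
The plan is to observe that this case is essentially a counting argument and does not require the full structural hypotheses (ATOM, degree gap, summing to zero) beyond the basic partition of $S$. By definition, $S_1$ (vectors of weight $1$) and $S_\delta = S \setminus S_1$ (vectors of weight at least $\delta$) form a disjoint partition of $S$, so that $|S_1| + |S_\delta| = |S|$.

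The key step is a pigeonhole contradiction. Suppose for the sake of contradiction that $\max\{|S_1|, |S_\delta|\} < \delta$, i.e., $|S_1| \leq \delta - 1$ and $|S_\delta| \leq \delta - 1$. Adding these two inequalities gives $|S| = |S_1| + |S_\delta| \leq 2\delta - 2 < 2\delta$, which contradicts the hypothesis $|S| \geq 2\delta$. Hence at least one of $|S_1|, |S_\delta|$ must be at least $\delta$, which is exactly the conclusion $\max\{|S_1|, |S_\delta|\} \geq \delta$.

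I do not anticipate any real obstacle here: this is the ``easy'' regime of Lemma~\ref{lem:conds}, where the sheer size of $S$ forces one of its two constituent parts to be large regardless of the combinatorial structure. The more delicate cases of the lemma are when $|S|$ is small (handled by Claim~\ref{clm:1}, requiring a careful analysis of how the weight-$1$ vectors and the heavy vectors cancel each other), and the intermediate range $\delta + 2 \leq |S| \leq 2\delta - 1$ (handled by Claims~\ref{clm:3} and~\ref{clm:4}), where the ATOM and summing-to-zero conditions genuinely come into play. In the present claim, however, the bound $|S| \geq 2\delta$ makes the partition argument alone sufficient.
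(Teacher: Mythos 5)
Your proof is correct and matches the paper's own argument: both are the same pigeonhole counting on the disjoint partition $|S| = |S_1| + |S_\delta|$, which the paper states in a single sentence. Your contrapositive phrasing is just a slightly more explicit rendering of the identical idea.
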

\begin{proof}
Since, $S$ is the disjoint union of $S_1$ and $S_{\delta}$, at least one of them must be bigger than $\delta$, if $S$ is bigger than $2\delta$. 
\end{proof}
\begin{claim}\label{clm:3}
If $S$ satisfies the conditions of Lemma \ref{lem:conds} and contains a vector $v$ of weight more than 1 such that $\supp(v) \cap \supp(e^i) = \emptyset$ for all vectors $e^i \in S_1$, then $|S_\delta| \geq \delta+1$.
\end{claim}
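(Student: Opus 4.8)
The plan is to run a cancellation count anchored at the special vector $v$. By the degree gap hypothesis every element of $S$ has weight $1$ or at least $\delta$, so the assumption $\wt(v)>1$ forces $v\in S_\delta$ and in particular $\wt(v)\geq\delta$. Since the elements of $S$ sum to the zero vector over $\bbF_2$, every coordinate $p\in\supp(v)$ must be matched by an odd (hence nonzero) number of ones contributed by the remaining vectors of $S$. First I would eliminate the weight-one vectors as possible contributors: by hypothesis $\supp(v)\cap\supp(e^i)=\emptyset$ for every $e^i\in S_1$, so no vector of $S_1$ has a one anywhere in $\supp(v)$. Consequently, each of the $\geq\delta$ coordinates of $\supp(v)$ must receive a one from some vector in $S_\delta\setminus\{v\}$.

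The second step turns this covering statement into a counting bound using the ATOM property. For each coordinate $p\in\supp(v)$, fix a vector $w_p\in S_\delta\setminus\{v\}$ that has a one at $p$. I claim the assignment $p\mapsto w_p$ is injective: if $w_p=w_{p'}$ for two distinct coordinates $p,p'\in\supp(v)$, then that single vector would carry ones at both $p$ and $p'$, giving $|\supp(w_p)\cap\supp(v)|\geq 2$ and contradicting the ATOM condition \eqref{eq:supp-cond}. Injectivity then yields $|S_\delta\setminus\{v\}|\geq|\supp(v)|\geq\delta$, and adding $v$ back in gives $|S_\delta|\geq\delta+1$, which is exactly the claim.

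The one place that requires care — and the step I expect to be the crux — is the injectivity of $p\mapsto w_p$, i.e. the assertion that no single vector of $S_\delta$ can be responsible for cancelling two distinct ones of $v$. This is precisely where the ATOM hypothesis (the $4$-cycle-free structure, encoded by \eqref{eq:supp-cond}) is used: it caps the overlap of any vector with $\supp(v)$ at one coordinate, so distinct coordinates demand distinct cancelling vectors. Everything else is routine bookkeeping about $\bbF_2$ cancellation together with the degree gap guaranteeing $\wt(v)\geq\delta$.
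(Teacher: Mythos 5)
Your proposal is correct and follows essentially the same argument as the paper's proof: the $\geq\delta$ ones of $v$ cannot be cancelled by $S_1$ (disjoint supports), each other vector cancels at most one of them by the ATOM condition, and counting these cancelling vectors together with $v$ itself gives $|S_\delta|\geq\delta+1$. Your explicit injectivity formulation of the map $p\mapsto w_p$ is just a more careful write-up of the paper's phrase ``any vector can cancel only a single 1 in $v$.''
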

\begin{proof}
The vector $v$ has 1 in at least $\delta$ locations, and no weight 1 vector has a 1 in any of these locations. For the vectors in $S$ to sum to zero, the 1s in $v$ have to be cancelled by 1s from other vectors. Since $S$ is ATOM, any vector can cancel only a single 1 in $v$. Since all the vectors cancelling the 1s in $v$ come from $S_\delta$, this means that $|S_\delta|\geq \delta+1$.
\end{proof}

We are now left with the task of proving Lemma \ref{lem:conds} for sets $S$ that do not fall under any of Claims \ref{clm:1}, \ref{clm:2}, \ref{clm:3}. This means that we need to consider $\delta+2\leq |S| \leq 2\delta-1$. In the proof of Claim \ref{clm:1} we already saw that $|S_1|\geq \delta$ if $|S_\delta|=1$ (if the single vector in $S_\delta$ has weight exactly $\delta$, this corresponds to $S$ being of size exactly $\delta+1$; if the weight of this vector is higher, then it corresponds to $S$ being of size $\delta+2$ or more). Therefore, we only need to consider $|S_\delta|\geq 2$. For $|S|\geq \delta+2$, if $|S_1|\leq 1$, then we have $|S_\delta|\geq \delta+1$, and therefore, we are already done. Therefore, we shall also consider $|S_1|\geq 2$. We shall also assume that $\delta\geq 3$, since if $\delta=2$ and $|S|\geq \delta+2$, we are already in the $|S|\geq 2\delta$ case of Claim \ref{clm:2}, and therefore done. Finally, we can assume that the condition in Claim \ref{clm:3} that $S$ contains a vector $v$ of weight more than 1 such that $\supp(v) \cap \supp(e^i) = \emptyset$ for all vectors $e^i \in S_1$, then $|S_\delta| \geq \delta+1$, is not true.

\begin{claim}\label{clm:4}
If $S$ is a set satisfying the conditions of Lemma \ref{lem:conds}, and additionally, the following are true:
\begin{enumerate}
\item $\delta \geq 3$, and $\delta+2 \leq |S| \leq 2\delta-1$;
\item $ |S_1|, |S_\delta| \geq 2$;
\item For each vector $v^j \in S_\delta$, there is a vector $e_j\in S_1$ such that $|\supp(v_j) \cap \supp(e_j)| = 1$.
\end{enumerate}
Then $\max\{|S_1|,|S_\delta|\} \geq \delta+1$.
\end{claim}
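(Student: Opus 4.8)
The plan is to argue by contradiction: assume $\max\{|S_1|,|S_\delta|\} \le \delta$, i.e.\ both $|S_1| \le \delta$ and $|S_\delta| \le \delta$, and derive a contradiction with the hypothesis $|S| \le 2\delta-1$. The engine of the proof is a double-counting argument that plays the ATOM condition (which caps pairwise overlaps) against the fact that the vectors of $S$ sum to zero (which, together with the degree gap, forces a large total weight among the vectors of $S_\delta$).

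First I would fix notation. Since $\delta \ge 2$, every vector of $S_\delta$ has weight at least $\delta \ge 2$, so $S_1$ consists exactly of the distinct basis vectors $e^j$ for $j$ in some index set $J$ with $|J| = |S_1|$. For each coordinate $c$ let $m_c$ be the number of vectors of $S_\delta$ whose support contains $c$. Because the vectors of $S$ sum to zero, the number of $1$s at every coordinate is even; as the only $S_1$ contribution at a coordinate $c \in J$ is the single $1$ of $e^c$, this forces $m_c$ to be odd (hence $\ge 1$) for $c \in J$ and even for $c \notin J$.

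Next I would extract two bounds on $Q := \sum_c m_c(m_c-1)$. The ATOM condition says any two vectors of $S_\delta$ share at most one coordinate, so $\sum_c \binom{m_c}{2} \le \binom{|S_\delta|}{2}$ and hence $Q \le |S_\delta|(|S_\delta|-1)$. For a lower bound, set $W := \sum_c m_c = \sum_{v \in S_\delta}\wt(v) \ge \delta\,|S_\delta|$ by the degree gap, and note from the parity constraints that $m_c(m_c-2) \ge -1$ for $c \in J$ (equality iff $m_c=1$) and $m_c(m_c-2)\ge 0$ for $c \notin J$; summing gives $Q - W = \sum_c m_c(m_c-2) \ge -|S_1|$, so $Q \ge \delta\,|S_\delta| - |S_1|$. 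Combining the two bounds yields $\delta\,|S_\delta| - |S_1| \le |S_\delta|(|S_\delta|-1)$, i.e.\ $|S_1| \ge |S_\delta|(\delta+1-|S_\delta|)$. Under the assumption $|S_1| \le \delta$ this gives $|S_\delta|(\delta+1-|S_\delta|) \le \delta$; using the factorization $x(\delta+1-x)-\delta = -(x-1)(x-\delta)$, the left-hand side is at least $\delta$ for every integer $x$ with $2 \le x \le \delta$, with equality only at $x=\delta$. Since $2 \le |S_\delta| \le \delta$, we are forced to $|S_\delta| = \delta$ and then $|S_1| \ge \delta$, so $|S_1| = |S_\delta| = \delta$ and $|S| = 2\delta$, contradicting $|S| \le 2\delta-1$.

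I expect the main obstacle to be identifying the right quantity to double-count, since the inequality has to absorb both the ATOM restriction and the parity restriction coming from summing to zero; the second moment $\sum_c m_c(m_c-1)$ is what makes both the upper and the lower bound clean. I would also flag that the final pinch hinges on the boundary configuration $|S_1| = |S_\delta| = \delta$ being excluded precisely by the hypothesis $|S| \le 2\delta-1$, so it is essential that this sub-case is the one carved out for Claim~\ref{clm:4}.
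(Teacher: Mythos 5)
Your proof is correct, and it takes a genuinely different route from the paper's. The paper proves Claim~\ref{clm:4} by induction on the gap $|S|-\delta$: using hypothesis 3, it finds intersecting pairs $(v^i,e^i)$, $(v^j,e^j)$, replaces them by the merged vectors $v^i\oplus e^i$ (and $v^j\oplus e^j$ when the top weight is exactly $\delta$) to get a smaller ATOM set $S'$ with degree gap $\delta-1$ or $\delta$, and then invokes Claim~\ref{clm:1} as the base case or the induction hypothesis, with a separate case analysis depending on whether the maximum weight in $S_\delta$ equals $\delta$. Your argument instead bounds the second moment $\sum_c m_c(m_c-1)$ from above by $|S_\delta|(|S_\delta|-1)$ via ATOM and from below by $\delta|S_\delta|-|S_1|$ via the degree gap together with the parity constraints forced by summing to zero ($m_c$ odd on $\supp(S_1)$, even off it), yielding $|S_1|\geq |S_\delta|(\delta+1-|S_\delta|)$ and pinching to the excluded boundary $|S_1|=|S_\delta|=\delta$, i.e.\ $|S|=2\delta$. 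Each step checks out: the overlap count $\sum_c\binom{m_c}{2}\leq\binom{|S_\delta|}{2}$, the pointwise bounds $m_c(m_c-2)\geq -1$ (odd case) and $\geq 0$ (even case), and the factorization $x(\delta+1-x)-\delta=-(x-1)(x-\delta)$ are all valid, and the contradiction with $|S|\leq 2\delta-1$ is exactly the role that hypothesis 1 plays. What your approach buys: it is non-inductive, does not lean on Claims~\ref{clm:1}--\ref{clm:3}, and notably never uses hypothesis 3 at all, so it establishes a slightly stronger statement; the quantitative inequality also shows how fast $|S_1|$ must grow when $|S_\delta|<\delta$. What the paper's approach buys: the inductive surgery and its case analysis expose structural information about which configurations sit on the boundary (all heavy vectors of weight exactly $\delta$, pairwise intersecting, each cancelled by a unique weight-1 vector), which is the kind of characterization the authors can reuse to pin down exactly when $\Ddist(G)$ equals $\delta$ rather than $\delta+1$.
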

\begin{proof}
The proof will be by induction on the difference between the degree gap of the set $S$ and its size. The base case is $|S|=\delta+2$, where the difference is 2. The proofs for the base case and the induction step are quite similar, so we shall describe them together. For the induction step where the difference is $i+1$, the induction hypothesis is that $\max\{|S'_1|,|S'_{\delta'}|\} \geq \delta'+1$ for all sets $S'$ of size up to $\delta'+i$ and degree gap $\delta'$, satisfying the conditions of the lemma (the hypothesis is obviously not true for $i=1$, since in that case $\max\{|S'_1|,|S'_{\delta'}|\}$ may be $\delta'$ or $\delta'+1$). 

Suppose every vector in $S_\delta$ intersects (i.e., has a 1 in a common location with) the same vector in $S_1$. Since $S_1$ contains at least two vectors, this means there must be a vector in $S_1$ that does not intersect with any vector in $S_\delta$. But a 1 in a vector in $S_1$ cannot be cancelled by a 1 in another vector in $S_1$, therefore this cannot happen. Therefore, there must exist distinct vectors $v^i, v^j \in S_\delta$ and $e^i, e^j \in S_1$ such that $v^i$ intersects with $e^i$ and $v^j$ with $e^j$. We can also assume that $v^i$ is the highest weight vector in $S$. We shall divide the proof into two cases: when $v^i$ has weight $\delta$ (this means that all vectors in $S_\delta$ have weight $\delta$), and when it has weight more than $\delta$.

\paragraph{Case (i): $\wt_\H(v^i) = \delta$.} Consider the vectors $v^i\oplus e^i$ and $v^j\oplus e^j$. They have weight $\delta-1 \geq 2$, so they cannot be in the set $S$. We define the set $S' = \{v^i\oplus e^i, v^j\oplus e^j\}\cup S \setminus \{v^i, v^j, e^i, e^j\}$. $S'$ is ATOM, has degree gap $\delta'=\delta-1$, and sums to zero if and only if $S$ sums to zero. The size of $S'$ is 2 less than that of $S$, and therefore, the difference between its size and degree gap is 1 less than that of $S$.

If $|S|=\delta+2$, then $|S'|=\delta'+1$, and by Claim \ref{clm:1} it either has $|S'_{\delta'}|=1$ and $|S'_1|=\delta'$, or $|S'_{\delta'}|=\delta'+1$ and $|S'_1|=0$. But we in fact know the second case cannot happen as it would mean all the vectors in $S'_{\delta'}$ have weight $\delta'$, whereas we know two vectors in $S'_{\delta'}$ have weight $\delta'$, and the rest have weight $\delta'+1$. Therefore we have $|S'_1|=\delta'$, which means that $|S_1|=\delta'+2=\delta+1$.

If $|S|=\delta+i+1$, then $|S'|=\delta'+i$. Then by the induction hypothesis, $\max\{|S'_1|,|S'_{\delta'}|\} \geq \delta'+1$. Now $|S_1|=|S'_1|+2$ and $|S_\delta|=|S'_{\delta'}|$. This means that if $|S_1|\geq |S_\delta|-2$, then $|S_\delta|\geq \delta'+3 =\delta+2$.\footnote{Note that we do not assume in the statement of Lemma \ref{lem:conds} that $S$ is an ATOM set with degree gap $\delta$ of minimum size that sums to zero. Therefore, the lower bound of $\delta+2$ here does not contradict Fact \ref{fc:diag-dist-ub}. Rather it shows that sets which satisfy this case are never the sets of minimum size.} Otherwise, if $|S_\delta|+2 > |S_1|$, then $|S_\delta|\geq \delta$. But we shall show that the $|S_\delta|=\delta$ case cannot happen. If $|S_\delta|$ is of size $\delta$, then every vector in $S_\delta$ must have at least one 1 in some location that cannot be cancelled by 1s from other vectors in $S_\delta$. Moreover, this location must be unique for each vector in $S_\delta$, because otherwise two vectors from $S_\delta$ would intersect in more than one location. Therefore, there must be a unique vector in $S_1$ can cancels this 1 for each vector in $S_\delta$. This gives $|S_1| \geq \delta$, which means $|S| \geq 2\delta$. This contradicts the assumption that $|S| \leq 2\delta-1$, and therefore we must have $|S_\delta| \geq \delta+1$ in this case.

\paragraph{Case (ii): $\wt_\H(v^i) > \delta$.} Consider the vector $v^i\oplus e^i$. It has weight $\geq \delta$ and intersects with $v^i$ in at least $\delta \geq 3$ many places. Therefore it cannot be in $S$: otherwise $S$ would not be ATOM. Consider $S'=\{v^i\oplus e^i\}\cup S\setminus\{v^i, e^i\}$. $S'$ is ATOM, has degree gap $\delta$, and sums to zero if and only if $S$ sums to zero. The size of $S'$ is 1 less than that of $S$, and therefore, the difference between its size and degree gap is 1 less than that of $S$.

If $|S|=\delta+2$, then $|S'|=\delta+1$, and it satisfies either $|S'_1|\geq \delta$, or $|S'_{\delta}| \geq \delta+1$. In the first case we have $|S_1|=|S'_1|+1 \geq \delta+1$, and in the second case we have $|S_\delta|=|S'_\delta| \geq \delta+1$.

If $|S|=\delta+i+1$, then $|S|=\delta+i$. By the induction hypothesis, $\max\{|S'_1|, |S'_\delta|\}\geq \delta+1$. This gives either $|S_1|\geq \delta+2$, or $|S_\delta|\geq \delta+1$.
\end{proof}

\section*{Acknowledgement}
U.~K. is supported by the NRF grant NRF2021-QEP2-02-P05. S.~K. is funded by the NSERC Canada Discovery Grants Program and Fujitsu Labs America; research at the Institute for Quantum Computing (IQC) is supported by Innovation, Science and Economic Development (ISED) Canada.

\ifCLASSOPTIONcaptionsoff
  \newpage
\fi

\bibliographystyle{IEEEtran}
\bibliography{dia}

\end{document}